\definecolor{mydarkblue}{rgb}{0,0.08,0.45}
\newcommand{\cd}{\textsc{Cluster Deletion}\xspace}
\newcommand{\minstc}{\textsc{MinSTC}\xspace}
\newcommand{\wedges}{\ensuremath{\mathcal{W}}\xspace}
\theoremstyle{plain}
\newtheorem{obser}{Observation}[section]
\newtheorem{theorem}{Theorem}[section]
\newtheorem{lemma}[theorem]{Lemma}
\theoremstyle{definition}
\theoremstyle{remark}
\renewcommand{\cite}[1]{\citep{#1}}
\author{Vicente Balmaseda$^{1}$, Ying Xu$^{2}$, Yixin Cao$^{2}$, Nate Veldt$^{1}$\\
	\\
	\normalsize{$^{1}$Department of Computer Science and Engineering, Texas A\&M University, Texas, USA}\\
	\normalsize{$^{2}$Department of Computing, The Hong Kong Polytechnic University, Hong Kong, China}\\
	\\
}
\begin{document}
	\title{Combinatorial Approximations for Cluster Deletion:\\Simpler, Faster, and Better}
	
	\date{}
	\maketitle
	\begin{abstract}
		Cluster deletion is an NP-hard graph clustering objective with applications in computational biology and social network analysis, where the goal is to delete a minimum number of edges to partition a graph into cliques. 
		We first provide a tighter analysis of two previous approximation algorithms, improving their approximation guarantees from 4 to 3.
		Moreover, we show that both algorithms can be derandomized in a surprisingly simple way, by greedily taking a vertex of maximum degree in an auxiliary graph and forming a cluster around it.
		One of these algorithms relies on solving a linear program.  Our final contribution is to design a new and purely combinatorial approach for doing so that is far more scalable in theory and practice.
	\end{abstract}

\section{Introduction}
Graph clustering is a fundamental task in graph mining where the goal is to partition nodes of a graph into disjoint clusters that have dense internal connections but are only sparsely connected to the rest of the graph. This has a wide variety of applications which include detecting communities in social networks~\cite{fortunato2010community}, identifying related genes in biological networks based on gene expression profiles~\cite{Ben-DorShamirYakhini1999}, and finding groups of pixels in an image that belong to the same object~\cite{shi2000normalized}. 
An idealized notion of a cluster in a graph is a set of nodes that is completely connected internally (i.e., a clique) while being completely disconnected from the rest of the graph. \emph{Cluster graph modification problems}~\cite{shamir2004cluster} are a class of graph clustering objectives that seek to edit the edges in a graph as little as possible in order to achieve this idealized structure. One widely studied problem is correlation clustering~\cite{BansalBlumChawla2004}, which 
can be cast as adding or deleting a minimum number of edges to convert a graph into a disjoint union of cliques. 
This problem is also known as \emph{cluster editing}. 
Designing approximation algorithms for different variants of correlation clustering has a long history, and has also seen extensive interest in the past few years in the machine learning community~\cite{jafarov2020ccasymmetric,jafarov2021local,bun2021differentially,cohen2021correlation,veldt2022correlation,stein2023partial,davies2023fast,assadi2023streaming}.

This paper focuses on a variant of correlation clustering called \cd{}, which seeks a minimum number of edges to \emph{delete} so that the graph becomes a disjoint set of cliques. \cd was first motivated by applications in clustering gene networks~\cite{Ben-DorShamirYakhini1999} and arises as an interesting special case of other more general frameworks for clustering~\cite{CharikarGuruswamiWirth2005,puleo2015correlation,veldt2018correlation}. The problem is NP-hard, but has been studied extensively from the perspective of parameterized algorithms~\cite{Gramm2005,damaschke2009bounded,gao2013cluster,bocker2011even,bathie2022kernels} and approximation algorithms~\cite{CharikarGuruswamiWirth2005,dessmark2007edgeclique,puleo2015correlation,veldt2018correlation,veldt2022correlation}. We provide several improved theoretical results and practical implementations for combinatorial algorithms for this task.

\paragraph{Previous work.}
The first approximation algorithm for \cd{} was based on rounding a linear programming (LP) relaxation and came with a factor $4$-approximation guarantee~\cite{CharikarGuruswamiWirth2005}. 
Other approximation algorithms based on the same canonical LP were subsequently developed~\cite{vanzuylen2009deterministic,puleo2015correlation}, culminating in the current-best approximation factor of 2~\cite{veldt2018correlation}. One limitation of all of these algorithms is that the underlying LP relaxation has $O(n^3)$ constraints for a graph with $n$ nodes, and is prohibitively expensive to solve in practice on large instances. Recently,~\citet{veldt2022correlation} provided faster approximation algorithms by rounding different and less expensive lower bounds for \cd{}. The first was a 4-approximation algorithm based on rounding an LP relaxation for a related problem called Strong Triadic Cluster (STC) labeling~\cite{sintos2014using}. The STC LP relaxation has fewer constraints than the canonical \cd{} LP relaxation, but still provides a lower bound on \cd{}. Veldt also developed the first \emph{combinatorial} approximation algorithm, called \emph{MatchFlipPivot}, which applies a fast algorithm for STC labeling and then rounds the resulting edge labels into a 4-approximate \cd{} solution. In numerical experiments, solving and rounding the STC LP relaxation using black-box LP software was shown to be roughly twice as fast solving and rounding the canonical LP, while \textit{MatchFlipPivot} was shown to be orders of magnitude faster. 

\paragraph{Motivating questions.} While these recent results lead to more practical algorithms, there is still a gap between theory and practice for \cd{} algorithms, and several open questions remain. Although the STC-based algorithms are faster and more practical, their 4-approximation guarantee is still noticeably worse than the 2-approximation based on the canonical LP relaxation. In practice, the STC-based algorithms tend to produce solutions that are much better than just a 4-approximation~\cite{veldt2022correlation}. 
A natural direction is to try to improve approximation factors and bridge the gap between theoretical and practical performance of STC-based methods. 

Another direction is to address the performance gap between \textit{MatchFlipPivot} and the STC LP rounding algorithm. \textit{MatchFlipPivot} is far faster in practice while satisfying the same worst-case approximation guarantee. At the same time, the STC LP relaxation is guaranteed to return a tighter lower bound for \cd{}, and was shown to produce higher quality results in practice. Furthermore, solving the STC LP relaxation was observed to often return the optimal solution for the canonical LP relaxation in practice. In these cases, the LP rounding technique is guaranteed to return a 2-approximate solution. These observations motivate the study of better approximation guarantees and faster techniques for solving the STC LP relaxation. 

Finally, existing implementations of the STC-based algorithms are randomized, and their approximation guarantees hold only in expectation. In theory these algorithms can be made deterministic by leveraging existing derandomization techniques~\cite{vanzuylen2009deterministic}. However, the deterministic versions are more complicated and slower, and as such have not been implemented in practice.

\paragraph{Our contributions.}
We significantly bridge the theory-practice gap by presenting algorithms that are \emph{simpler}, \emph{faster}, and have \emph{better} approximation guarantees.
\begin{itemize}[itemsep=0pt]
 \item  We provide a simplified presentation and a \emph{}{tight} analysis of the \textit{MatchFlipPivot} algorithm, proving an improved 3-approximation guarantee for the method and providing instances on which the ratio is asymptotically 3.
    \item We show a similar tighter analysis for an STC LP rounding algorithm, improving its approximation guarantee to 3.
    \item We improve the runtime of \textit{MatchFlipPivot} by designing a faster algorithm for a key step: computing a maximal edge-disjoint set of open wedges in a graph.
    \item We prove that the STC LP relaxation can be reduced to a minimum $s$-$t$ cut problem, leading to a faster, purely combinatorial version of our LP-based algorithm.
    \item We prove a simpler and faster new approach for \emph{deterministically} rounding a \cd{} lower bound into an approximate solution.
\end{itemize}

        
To put the last contribution into context, we note that previous approximations for \cd{} 
rely on (1)~computing a lower bound on a graph~$G$, (2)~rounding the lower bound into a new graph~$\hat{G}$, and (3)~forming clusters by \emph{pivoting} in~$\hat{G}$ (repeatedly select a node and cluster it with its neighbors). We prove that selecting pivot nodes based simply on degrees in~$\hat{G}$ provides the same approximation guarantee as other (more complicated and computationally expensive) deterministic pivoting strategies. 

We accompany our theoretical results with practical implementations and numerical experiments. They include the first implemented deterministic algorithms for \cd{}, which in practice produce solutions that are typically much less than 3 times the optimal solution. We also implement our combinatorial algorithm for solving the STC LP relaxation and demonstrate in practice that it is significantly faster than using black-box LP software and scales to instances that are orders of magnitude larger.

\section{Preliminaries and Related Work}
Let \(G=(V,E)\) be an unweighted undirected graph with \(n=|V|\) and \(m=|E|\). We use the $\tilde{O}(\cdot)$ notation to suppress logarithmic factors in runtimes, e.g., $O(\log n) = \tilde{O}(1)$.
The problems we consider rely on the concept of open wedges. An \textit{open wedge centered at $k$} is a node triplet \((i,j,k)\) such that \((i,k) \in E\), \((j, k) \in E\) and \((i,j) \notin E\).
The third node indicates the center of the wedge. The order of the first two nodes in an open wedge is irrelevant, hence \((i,j,k) \equiv (j,i,k)\). 
Let $\wedges(G)$ be the set of open wedges in \(G\), and \(\wedges_k(G) \subseteq \wedges(G)\) be the set of open wedges centered at $k$. When $G$ is clear from context we simply write \wedges and $\wedges_k$.

\subsection{Cluster Deletion}
Given graph \(G\), \cd seeks a set of edges \(E_D \subseteq E\) that minimizes \(|E_D|\) such that \(G'=(V,E - E_D)\) is a disjoint set of cliques. This is equivalent to forming clusters in a way that minimizes the number of edges between clusters (known as ``mistakes'') while ensuring all clusters are cliques. This can be formulated as a binary linear program (BLP) as follows:
%
\begin{equation}
	\label{eq:cd_lp}
	\begin{array}{lll}
          \min & \displaystyle\sum_{(i,j) \in E} x_{ij}  &\\
          \text{s.t.} &	x_{ik} + x_{jk} \geq x_{ij} & \forall i,j,k\\
               & x_{ij} = 1	& \text{ if $(i,j) \notin E$} \\
               & x_{ij} \in \{0,1\} & \forall (i,j) \in E.
	\end{array}		
\end{equation}
This BLP has one variable for each pair of nodes, and \(x_{ij} = 0\) if and only if nodes \(i\) and \(j\) are in the same cluster. The {canonical} LP relaxation for \cd{} is obtained by replacing $x_{ij} \in \{0,1\}$ with nonnegativity constraints $x_{ij} \geq 0$. \citet{CharikarGuruswamiWirth2005} presented a 4-approximation based on this LP relaxation. The results of~\citet{vanzuylen2009deterministic} for constrained variants of correlation clustering imply a 3-approximation algorithm for \cd{} by rounding the same LP. The current best approximation factor for \cd, also obtained by rounding this LP, is 2~\cite{veldt2018correlation}.


\subsection{Strong Triadic Closure Labeling}
\cd{} has a well-documented connection to another NP-hard graph optimization problem~\cite{sintos2014using}. The latter problem is derived from the Strong Triadic Closure (STC) principle from social network analysis~\cite{granovetter1973strength,easley2010networks}, which states that if two individuals both have a strong connection to a mutual friend, they are likely to share at least a weak connection with each other.

Following this principle, we can label the edges in \(G\) as either weak or strong such that the STC principle is satisfied, i.e., each open wedge has at least one weak edge. This is called an \textit{STC labeling}, and is encoded by a set of weak edges $E_W \subseteq E$.
The \textit{minimum weakness strong triadic closure} (\minstc) problem~\cite{sintos2014using} is then the problem of finding a strong triadic closure labeling of \(G\) that minimizes the number of weak edges. Formally this is cast as the following BLP:
\begin{equation}\label{eq:minstc_blp}
	\begin{array}{lll}
		\text{min}          & \displaystyle\sum\limits_{(i,j) \in E} x_{ij} \\
		\text{s.t.}   & x_{ik}  + x_{jk} \geq 1 & \forall (i,j,k) \in \wedges\\
		& x_{ij} \in \{0,1\} & \forall (i,j) \in E. \\
	\end{array}
\end{equation}
The variable $x_{ij}$ is equal to 1 if and only if edge $(i,j)$ is a weak edge. The constraints in this BLP are in fact a subset of the constraints in the \cd{} BLP in~\eqref{eq:cd_lp}. This implies that every feasible solution $E_D$ for \cd defines a valid STC labeling $E_W = E_D$, and hence \minstc lower bounds \cd. However, deleting edges in an arbitrary STC labeling $E_W$ does not necessarily produce a disjoint union of cliques. The relationship between \minstc and \cd has been noted in several different contexts~\cite{konstantinidis2018strong,veldt2022correlation,bengali2023faster}, and there are known graphs where their optimal solutions differ by up to a factor of $8/7$~\cite{gruttemeier2020relation}. 

\textbf{Approximations based on vertex cover.} Solving \minstc over a graph $G = (V,E)$ is equivalent to finding a minimum vertex cover in the Gallai graph of $G$, obtained by associating each edge $(i,j) \in E$ with a vertex $v_{ij}$ and introducing an edge $(v_{ik}, v_{jk})$ in the Gallai graph if $(i,j,k)$ defines an open wedge in $G$~\cite{le1996gallai}. Every algorithm for vertex cover instantly implies an algorithm for \minstc with the same approximation factor. One simple 2-approximation for \minstc is to find a maximal edge-disjoint set of open wedges in $G$, then label an edge $(i,j) \in E$ as weak if it is in one of the open wedges in this set. This is equivalent to applying a standard maximal matching 2-approximation for vertex cover in the Gallai graph. Another simple 2-approximation is to solve the LP relaxation of the BLP in~\eqref{eq:minstc_blp} and label $(i,j) \in E$ as weak if $x_{ij} \geq 1/2$, analogous to a standard LP rounding algorithm for vertex cover. \citet{nemhauser1975vertex} showed that the LP relaxation for vertex cover is half integral, meaning that every basic feasible solution has LP variables satisfying $x_{ij} \in \{0, 1/2, 1\}$. This property therefore also holds for the STC LP relaxation, obtained by replacing binary constraints in~\eqref{eq:minstc_blp} with nonnegativity constraints $x_{ij} \geq 0$.

\subsection{STC + Pivot Framework}
\label{sec:stccd}
The {Pivot} algorithm repeatedly selects an unclustered node (the \emph{pivot}) in a graph and then clusters it with all of its unclustered neighbors. This was first designed as a way to approximate {correlation clustering}. When pivots are chosen uniformly at random and the procedure is applied directly to a graph $G$, this is a randomized 3-approximation algorithm for correlation clustering~\cite{AilonCharikarNewman2008}. Many algorithms for different variants of correlation clustering and \cd use Pivot as one step in a broader algorithmic pipeline~\cite{vanzuylen2009deterministic,ChawlaMakarychevSchrammEtAl2015,jafarov2020ccasymmetric,veldt2022correlation}. Choosing random pivots leads to approximation guarantees that hold only in expectation, but~\citet{vanzuylen2009deterministic} also showed techniques for carefully selecting pivot nodes in order to obtain deterministic approximation guarantees for different problems variants.

\citet{veldt2022correlation} recently provided a general framework for approximating \cd{} by combining STC labelings with pivoting procedures. The framework first (1)~obtains an approximately optimal STC labeling $E_W$ for a graph $G = (V,E)$, and then (2)~runs Pivot on graph $\hat{G} = (V, E - E_W)$ to form clusters. If pivoting on a node $k$ places two other nodes $i$ and $j$ inside a cluster, then both $(i, k)$ and $(j, k)$ are strong edges, which guarantees $(i,j) \in E$. This leads to a useful observation.
\begin{obser}
	\label{lem:cliques}
	If $E_W \subseteq E$ is an STC labeling for $G = (V,E)$, running Pivot on $\hat{G} = (V, E- E_W)$ with any pivot selection strategy produces clusters that are cliques in $G$.
\end{obser}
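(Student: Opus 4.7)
The plan is to fix an arbitrary cluster $C$ produced by Pivot on $\hat{G} = (V, E - E_W)$ and show directly that every pair of nodes in $C$ is joined by an edge of $G$. By the Pivot procedure, such a cluster consists of a pivot node $k$ together with some subset of its neighbors in $\hat{G}$ that were still unclustered when $k$ was selected. I would proceed pair by pair: the pairs of the form $(i,k)$ with $i \in C \setminus \{k\}$ are immediate, and the pairs of the form $(i,j)$ with $i,j \in C \setminus \{k\}$ are where the STC property comes in.

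First, for any $i \in C \setminus \{k\}$, membership in $C$ means $(i,k)$ is an edge of $\hat{G}$, i.e., $(i,k) \in E - E_W \subseteq E$, so $(i,k) \in E$. Next, for any two distinct nodes $i, j \in C \setminus \{k\}$, both $(i,k)$ and $(j,k)$ lie in $E - E_W$; that is, both are strong edges under the labeling $E_W$. I would then argue by contradiction: if $(i,j) \notin E$, then the triple $(i,j,k)$ is by definition an open wedge in $G$. Since $E_W$ is an STC labeling, at least one of $(i,k)$ or $(j,k)$ must belong to $E_W$, contradicting the fact that both are strong. Hence $(i,j) \in E$, and $C$ is a clique in $G$.

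Since $C$ was an arbitrary cluster and the argument does not depend on how the pivot $k$ was chosen, the conclusion holds for any pivot selection strategy. There is no real obstacle here: the only substantive step is the short contradiction using the STC constraint on the wedge $(i,j,k)$, and everything else follows from unpacking the definitions of $\hat{G}$ and the Pivot procedure.
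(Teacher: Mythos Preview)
Your proof is correct and follows essentially the same approach as the paper: the paper's justification (given in the sentence preceding the observation) is the one-line version of your argument, namely that if both $(i,k)$ and $(j,k)$ are strong then $(i,j,k)$ cannot be an open wedge, so $(i,j)\in E$.
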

\citet{veldt2022correlation} used this framework to design two 4-approximation algorithms for \cd: one based on rounding the STC LP relaxation, and a faster purely combinatorial algorithm called \emph{MatchFlipPivot} based on finding a maximal edge-disjoint set of open wedges. The approximation guarantees hold in expectation when pivot nodes are chosen uniformly at random. The derandomized pivoting techniques of~\citet{vanzuylen2009deterministic} can be used to obtain deterministic approximation guarantees, though this is more involved conceptually and far slower computationally.


\section{Improved Approximation Analysis}
We prove tighter approximations and new deterministic rounding schemes for combining STC labelings with Pivot.

\begin{algorithm}[t]
	\caption{Pivot$(\hat{G} = (V,\hat{E}))$}
	\label{alg:piv}
	\begin{algorithmic}[1]
		\STATE $V' \gets V;  E'\gets \hat{E}; \mathcal{C} \gets \emptyset$
		\WHILE{$V'$ not empty}
		\STATE Select pivot $k \in V'$		 
		\STATE $C_k = k \cup \{i \in V' \colon (i,k) \in E'\}$ \hfill\texttt{// get neighbors}
		\STATE $\mathcal{C} = \mathcal{C} \cup \{C_k\}$  \hfill\texttt{// new cluster}
		\STATE $V' \leftarrow V' - C_k$;  $E' \leftarrow \hat{E} \cap (V' \times V')$	\hfill\texttt{// update graph}
		\ENDWHILE
		\STATE Return clustering $\mathcal{C}$
	\end{algorithmic}
\end{algorithm}
\subsection{Pivoting Lemma}
Algorithm~\ref{alg:piv} shows the generic Pivot algorithm applied to a graph $\hat{G}$. The resulting clusters are typically not cliques in $\hat{G}$, but we will combine these strategies with STC labeling techniques and Observation~\ref{lem:cliques} in order to design \cd approximation algorithms. Consider what happens if we have an induced subgraph $G' = (V',E')$ of $\hat{G}$ at some intermediate step of the Pivot algorithm and we pivot on a node $k \in V'$ to form a new cluster $C_k \subseteq V'$. Let $\mathit{deg}_k(G') = |C_k| - 1$ be the degree of node $k$ in $G'$ (the number of neighbors of $k$ in $V'$), and define two sets of node pairs:
\begin{align*}
	B_k(G') &= \{ (i,j) \in E' \colon (i,k) \in E', (j,k) \notin E'\}, \\
	N_k(G') &= \{ (i,j) \notin E' \colon (i,k) \in E', (j,k) \in E'\}.
\end{align*}
The set $B_k(G')$ represents edges on the \emph{boundary} of cluster $C_k$ and $N_k(G')$ is the set of \emph{non-edges} inside the cluster. We define three strategies for selecting pivots.  

\begin{itemize}
\item \textbf{Pivot Strategy 1.} Select a pivot $k$ with the maximum degree in $G'$.
\item \textbf{Pivot Strategy 2.} Select a pivot $k$ that minimizes $|B_k(G')| / |N_k(G ')|$.
\item \textbf{Pivot Strategy 3.} Select a pivot $k$ uniformly at random.
 \end{itemize}

\begin{lemma}
	\label{lem:piv}
	Let $\mathcal{B}$ be the set of edges between clusters and $\mathcal{N}$ be the set of non-edges inside clusters that result from running Algorithm~\ref{alg:piv}. If Pivot Strategy 1 or 2 is used, then $|\mathcal{B}| \leq 2|\mathcal{N}|$. If Pivot Strategy 3 is used, this holds in expectation: $\mathbb{E}[|\mathcal{B}| ]= 2 \mathbb{E}[|\mathcal{N}|]$.
\end{lemma}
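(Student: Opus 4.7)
The plan is to prove the per-iteration statement that, at every call of Pivot on the current induced subgraph $G' = (V', E')$, the selected pivot $k$ satisfies $|B_k(G')| \leq 2|N_k(G')|$ under Strategies 1 and 2, and satisfies $\mathbb{E}[|B_k(G')|] = 2\,\mathbb{E}[|N_k(G')|]$ under Strategy 3. The lemma then follows by summing over iterations, using linearity (and the tower property in the randomized case), since $|\mathcal{B}|$ and $|\mathcal{N}|$ are exactly the totals of $|B_{k_t}(G'_t)|$ and $|N_{k_t}(G'_t)|$ across all pivoting steps $t$.

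The engine of the argument is a double-counting identity over open wedges of the current graph:
\[
\sum_{k \in V'} |B_k(G')| \;=\; 2\,|\wedges(G')| \;=\; 2\sum_{k \in V'} |N_k(G')|.
\]
To see this, I would check that each wedge $(i,j,k) \in \wedges(G')$ centered at $k$ contributes exactly $1$ to $|N_k(G')|$ (through the non-edge $(i,j)$) and exactly $1$ to each of $|B_i(G')|$ and $|B_j(G')|$ (through the edges $(k,j)$ and $(k,i)$, which cross the boundary of the cluster that would be formed around $i$ or $j$). Conversely, every element of $N_k(G')$ is a wedge centered at $k$, and every edge of $B_k(G')$ corresponds uniquely to a wedge in which $k$ is a leaf, so these are the only contributions.

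Strategies 2 and 3 then drop out easily. Under Strategy 3, uniform sampling gives $\mathbb{E}[|B_k|] = \tfrac{1}{|V'|}\sum_k|B_k| = \tfrac{2}{|V'|}\sum_k|N_k| = 2\,\mathbb{E}[|N_k|]$ by the identity. Under Strategy 2, if $|\wedges(G')|=0$ the identity forces $|B_k|=|N_k|=0$ for all $k$ and the bound is vacuous; otherwise some $k$ with $|N_k|>0$ must satisfy $|B_k|/|N_k|\leq 2$, because the reverse strict inequality at every such $k$ would yield $\sum_k|B_k|>2\sum_k|N_k|$, contradicting the identity.

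For Strategy 1 I would argue directly. Writing $d_v$ for the degree of $v$ in $G'$ and $|E_k|$ for the number of edges among $N(k)$, the counts expand to $|N_k(G')| = \binom{d_k}{2} - |E_k|$ and, using $\sum_{i\in N(k)}|N(i)\cap N(k)| = 2|E_k|$, also $|B_k(G')| = \sum_{i\in N(k)} d_i - d_k - 2|E_k|$. Subtracting, the inequality $|B_k|\leq 2|N_k|$ reduces to $\sum_{i\in N(k)} d_i \leq d_k^{\,2}$, which holds whenever $k$ has maximum degree, since then the sum has $d_k$ terms each at most $d_k$. The main obstacle I anticipate is making the wedge-to-pair double-count airtight (it is easy to undercount or double-count the two leaf roles) and handling the degenerate case $|N_k|=0$ cleanly for Strategy 2; once the identity is established, the remainder is routine algebra.
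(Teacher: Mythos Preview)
Your proposal is correct and, for Strategies~2 and~3, follows the paper's argument essentially verbatim: both establish the double-counting identity $\sum_{k\in V'}|B_k(G')| = 2|\wedges(G')| = 2\sum_{k\in V'}|N_k(G')|$ and read off the averaging and expectation statements directly.

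For Strategy~1 you take a slightly different route. The paper argues locally: for each $u\in C_k\setminus\{k\}$ it shows $b_u\le n_u$ directly from $\deg_u\le\deg_k$, then sums to get $|B_k|=\sum_u b_u\le\sum_u n_u=2|N_k|$. You instead expand $|B_k|$ and $|N_k|$ in terms of $d_k$, $|E_k|$, and $\sum_{i\in N(k)}d_i$, cancel the $|E_k|$ terms, and reduce to the clean inequality $\sum_{i\in N(k)}d_i\le d_k^{\,2}$. Both arguments are short and elementary; the paper's version has the minor bonus of yielding the per-neighbor inequality $b_u\le n_u$, while yours makes the dependence on the max-degree hypothesis completely transparent in a single line.
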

\begin{proof}
	Consider the graph $G' = (V',E')$ at a fixed intermediate step of the algorithm. For an arbitrary node $v \in V'$  we write $N_v = N_v(G')$, $B_v = B_v(G')$ and $\mathit{deg}_v = \mathit{deg}_v(G')$ 
	
	\emph{Strategy 1 analysis.} Assume that node $k$ is chosen as the pivot when applying Pivot Strategy 1. For an arbitrary node $u \in C_k\setminus\{k\}$, 
let $b_u$ be the number of edges in $G'$ that are incident to $u$ but not contained in $C_k$, and let $n_u$ be the number of non-edges involving $u$ that are in $C_k$. 
	Note that
	\begin{align*}
		b_u + (|C_k| - 1) - n_u = \mathit{deg}_u \leq \mathit{deg}_k = |C_k| - 1,
	\end{align*}
	which implies that $b_u \leq n_u$. Each non-edge in $C_k$ involves two nodes from $C_k\setminus\{k\}$, so
	\begin{align*}
		|B_k| = \sum_{u \in C_k \setminus \{k\} } b_u \leq \sum_{u \in C_k \setminus \{k\} } n_u = 2|N_k|.
	\end{align*}
	Thus, the number of new boundary edges in each iteration is bounded by twice the number of non-edges in the new cluster. Summing across all iterations gives $|\mathcal{B}| \leq 2|\mathcal{N}|$.
	
\emph{Strategy 2 and 3 analysis.}
Let $\mathcal{W}'$ be the set of open wedges in $G'$.  The following four statements are equivalent: (1)~$(i, j, k)$ is an open wedge; (2)~$(i, k)\in B_{j}$, (3)~$(j, k)\in B_{i}$, and (4)~$(i, j)\in N_{k}$.  Thus, $\sum_{k \in V'} |N_k| = |\mathcal{W}'|$ and $\sum_{k \in V'} |B_k| = 2 |\mathcal{W}'|$.
In other words,
\begin{equation}
  \label{eq:step}
 \sum_{k \in V'} \left( |B_k| - 2|N_k|\right) =
    \sum_{k \in V'} |B_k| -
     2 \sum_{k \in V'} |N_k|= 
    0.
\end{equation}
Therefore, there is at least one node satisfying $|B_k| - 2|N_k| \leq 0$, and applying Pivot Strategy 2 guarantees that $|B_k| \leq 2 |N_k|$, so summing across iterations again gives $|\mathcal{B}| \leq 2|\mathcal{N}|$. Regarding Pivot Strategy 3, Eq.~\eqref{eq:step} implies that for a uniform random pivot, $\mathbb{E}_{k \in V'} [|B_k| - 2|N_k| ] = 0$. Thus, at every iteration, the expected number of new boundary edges is twice the expected number of non-edges inside the cluster. By linearity of expectation, $\mathbb{E}[|\mathcal{B}| ] = 2 \mathbb{E}[|\mathcal{N}|]$.
\end{proof}


\subsection{Rounding a Disjoint Open Wedge Set}\label{sec:mfp_3_aprox}
\begin{algorithm}[t]
	\caption{$\textsc{MatchFlipPivot}(G = (V,E))$}
	\label{alg:mfp}
	\begin{algorithmic}[1]
		\STATE $W \gets $ maximal edge-disjoint set of open wedges in $G$. 		
		\STATE $E_W \gets$ edges contained in some open wedge of $W$. 
		\STATE Form $\hat{G} = (V,E - E_W)$
		\STATE Run $\text{Pivot}(\hat{G})$ \hfill\texttt{// for some choice of pivot strategy}
	\end{algorithmic}
\end{algorithm}
One way to approximate \minstc over a graph $G = (V,E)$ is a straightforward adaptation of the matching-based approximation algorithm for vertex cover.  We find a maximal edge-disjoint set of open wedges $W \subseteq \mathcal{W}$, and then for each $(i,j,k) \in W$, place edges $(i,k)$ and $(j,k)$ into the weak edge set $E_W$. Note that $|W|$ is a lower bound for \minstc (and also \cd) since each open wedge in $W$ must contain at least one weak edge (or in the case of \cd, one deleted edge) and no two wedges in $W$ share an edge. The edge set $E_W$ is therefore a 2-approximation for \minstc since $|E_W| = 2|W|$.
%
The randomized \emph{MatchFlipPivot} (MFP) algorithm of~\citet{veldt2022correlation} 
runs Pivot on $\hat{G} = (V, E- E_W)$ with uniform random pivot nodes. The algorithm has an expected approximation ratio 4, and can be derandomized using the techniques of~\citet{vanzuylen2009deterministic} as a black box. We note here that this corresponds to running Algorithm~\ref{alg:mfp} using Pivot Strategy 2. 

Our next result improves on this prior work by providing a tighter analysis of MFP to show an improved approximation guarantee of 3. 
Furthermore, we prove that our simple new degree-based pivoting strategy also provides a deterministic 3-approximation. This is significant given that the bottleneck of the previous deterministic MFP algorithm was computing and updating $N_k$ and $B_k$ values.
\begin{theorem}
	\label{thm:mfp}
	When using Pivot Strategy 1 or 2 on $\hat{G}$, Algorithm~\ref{alg:mfp} is a deterministic $3$-approximation for \cd. When selecting pivots uniformly at random, it is a randomized 3-approximation algorithm.
\end{theorem}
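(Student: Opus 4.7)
The plan is to control the CD cost in terms of $|W|$, which by construction is a lower bound on the optimum. Let $\mathcal{C}$ be the clustering produced by Pivot on $\hat{G}$. Every CD-mistake is an $E$-edge between two clusters, and I split these mistakes into two groups: let $a$ count the $E_W$-edges between clusters, so that the remaining mistakes (those coming from $\hat{E}$) are exactly the set $\mathcal{B}$ from Lemma~\ref{lem:piv} applied to $\hat{G}$. Thus the total CD cost equals $a + |\mathcal{B}|$.

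The key structural step is to apply Observation~\ref{lem:cliques}: every output cluster is a clique in $G$. Fix any wedge $(i,j,k)\in W$. Since $(i,j)\notin E$, nodes $i$ and $j$ cannot land in the same cluster, which eliminates two of the four possible ways the triple $\{i,j,k\}$ can distribute across clusters. Only two cases survive: (A) $k$ lies in the same cluster as exactly one of $i,j$, or (B) $i,j,k$ occupy three distinct clusters. In case (A) one wedge-edge is internal (contributing $1$ to $|\mathcal{N}|$, since it is a non-edge of $\hat G$ placed inside a cluster) and the other is between clusters (contributing $1$ to $a$); in case (B) both wedge-edges are between clusters (contributing $2$ to $a$ and $0$ to $|\mathcal{N}|$). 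Because the wedges in $W$ are edge-disjoint, each edge of $E_W$ is charged to a unique wedge, and summing across $W$ yields
\[
a + |\mathcal{N}| = 2|W|, \qquad \text{and} \qquad |\mathcal{N}| \le |W|.
\]

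Combining this identity with Lemma~\ref{lem:piv} finishes the proof. Under Pivot Strategy 1 or 2 we have $|\mathcal{B}| \le 2|\mathcal{N}|$, so
\[
\text{cost} \;=\; a + |\mathcal{B}| \;\le\; (2|W|-|\mathcal{N}|) + 2|\mathcal{N}| \;=\; 2|W| + |\mathcal{N}| \;\le\; 3|W| \;\le\; 3\cdot\mathrm{OPT}_{\cd}.
\]
For Pivot Strategy 3, the identity $a+|\mathcal{N}|=2|W|$ holds deterministically (it uses only Observation~\ref{lem:cliques} and the wedge structure), while Lemma~\ref{lem:piv} supplies $\mathbb{E}[|\mathcal{B}|]=2\,\mathbb{E}[|\mathcal{N}|]$, so taking expectations gives the same bound $\mathbb{E}[\text{cost}] \le 3|W|$.

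The main obstacle I anticipate is the bookkeeping around $\mathcal{N}$: one has to verify that every non-edge of $\hat G$ which ends up inside a cluster is genuinely an $E_W$-edge, and moreover belongs to exactly one wedge of $W$. This is precisely where Observation~\ref{lem:cliques} is indispensable, since it rules out any non-edge of $G$ (which would not arise from $E_W$) being pulled inside a cluster. Once that is pinned down, both the identity $a+|\mathcal{N}|=2|W|$ and the per-wedge bound $|\mathcal{N}|\le|W|$ follow immediately from the two-case analysis above.
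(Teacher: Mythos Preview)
Your proof is correct and follows essentially the same approach as the paper: your $a$ is the paper's $m_W$, your identity $a+|\mathcal{N}|=2|W|$ is the paper's $|\mathcal{N}|=|E_W|-m_W$, and your bound $|\mathcal{N}|\le|W|$ is equivalent to the paper's $m_W\ge|E_W|/2$, after which both arguments invoke Lemma~\ref{lem:piv} and conclude identically. Your per-wedge case analysis is a slightly more explicit way of deriving these two facts, but the logical content is the same.
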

\begin{proof}
Let $m_W$ denote the number of weak edges between clusters, and $m_S$ the number of other edges between clusters.  The three nodes in any open wedge $W\in \mathcal{W}$ must be separated into at least two clusters.  Thus, at least one of the two edges of every wedge in $W$ is between clusters.
This means that $m_{W} \ge |E_W|/2$.
Note that $m_S = |\mathcal{B}|$ and $|\mathcal{N}| = |E_W| - m_W$ because non-edges of $\hat G$ inside clusters are all weak edges.
By Lemma~\ref{lem:piv}, using Pivoting Strategy 1 or 2 on $\hat{G}$ guarantees that
\[
  m_S = |\mathcal{B}| \leq
  2 |\mathcal{N}| = 2(|E_W| - m_W).
\]
The total number of edges between clusters is thus:
\begin{align*}
  m_W + m_S &\leq m_W + 2(|E_W| - m_W) \\
            &= 2|E_W| - m_W \\
            &\leq \frac{3}{2} |E_W|\\
            &= 3 |W|\\
            &\leq 3 \text{OPT}_{CD}.
\end{align*}
If we select pivot nodes uniformly at random, $m_W$ and $m_S$ become random variables, but coupling Lemma~\ref{lem:piv} with the fact that $m_{W} \ge |E_W|/2$ for every choice of pivot nodes provides the same guarantee in expectation.
\end{proof}

The following theorem proves that independent of the pivot strategy used, Algorithm~\ref{alg:mfp} cannot have a ratio better than 3. 
\begin{theorem}
  \label{lem:mfp-best}
  The asymptotic ratio of Algorithm~\ref{alg:mfp} is at least three.
\end{theorem}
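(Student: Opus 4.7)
The plan is to construct a parameterized family $\{G_n\}$ on which Algorithm~\ref{alg:mfp} achieves approximation ratio approaching $3$, matching the upper bound of Theorem~\ref{thm:mfp}. First I would read off the simultaneous tightness conditions from the inequality chain $m_W+m_S \le 2|E_W|-m_W \le 3|W| \le 3\,\text{OPT}_{CD}$ in the proof of Theorem~\ref{thm:mfp}: (a)~$|W|=\text{OPT}_{CD}(G_n)$, so the edge-disjoint wedge bound is exact; (b)~$m_W=|W|$, so Pivot cuts exactly one edge of each wedge across clusters; and (c)~Lemma~\ref{lem:piv} holds with equality, i.e.\ $|\mathcal{B}|=2|\mathcal{N}|$ at every pivoting step.

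I would then design a small base gadget $H$ that realizes all three conditions simultaneously. The template I have in mind contains a wedge $(i,j,k)$ together with a few ``satellite'' vertices attached to $k$ such that: (1)~the only maximal edge-disjoint wedge set of $H$ is $\{(i,j,k)\}$, giving $|W|=1$ and $E_W=\{(i,k),(j,k)\}$; (2)~the residual graph $\hat H$ forces Pivot to merge $k$ with its satellite neighbors but to separate $k$ from either $i$ or $j$, cutting one weak edge plus two satellite (strong) edges; and (3)~$\text{OPT}_{CD}(H)=1$, achieved by a clustering that Pivot cannot discover. The family $G_n$ is then $n$ vertex-disjoint copies of $H$, so that $\text{OPT}_{CD}(G_n)=n$ and Algorithm~\ref{alg:mfp} outputs exactly $3n$ mistakes by independence. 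Robustness across the three pivot strategies (Strategies~1, 2, and 3) would be achieved by making $\hat H$ vertex-transitive on its non-isolated vertices, so that every candidate pivot has the same degree and the same $|B_k|/|N_k|$ value, leaving no "good" pivot option for any strategy.

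The main obstacle is the construction of $H$. Conditions (a)--(c) are jointly restrictive, and small natural candidates---$K_4$ minus an edge, the bowtie, friendship graphs, or $K_{t,t}$ minus a perfect matching---only achieve ratio $2$, because in each case either Pivot ends up cutting both edges of every wedge (violating (b)) or the residual graph has no strong edges to cut (violating (c)). A successful gadget must carefully interleave its weak and strong edges so that every Pivot cluster cuts exactly one weak edge and two strong edges, and I expect finding such a gadget to be the crux of the argument. Once $H$ is exhibited, the rest of the proof is a routine verification of $\text{OPT}_{CD}(H)$, of maximality of $W$ in $H$, and of Pivot's forced behavior under each of the three strategies.
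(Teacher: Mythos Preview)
Your proposal identifies the right tightness conditions but stops short of the actual construction, which---as you yourself note---is the crux. This is a genuine gap: the proof is not complete until the gadget is exhibited, and there is reason to doubt the ``$n$ disjoint copies of a fixed gadget $H$'' template will succeed at all. Taking disjoint copies of $H$ gives ratio exactly $\mathrm{ALG}(H)/\mathrm{OPT}(H)$ for every $n$, so you would need a single finite graph on which Algorithm~\ref{alg:mfp} achieves ratio \emph{exactly} $3$, not merely close to it. But your conditions (a)--(c) must then all hold with equality, and in particular (b) forces $m_W = |E_W|/2$ exactly, meaning each wedge in $W$ contributes precisely one weak edge between clusters and one inside. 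Tracing this through for $|W|=1$ quickly runs into obstructions: if $(i,k)$ is the weak edge placed inside a cluster, then the pivot $p$ must be strongly adjacent to both $i$ and $k$, and to get $b_i = n_i = 1$ and $b_k = n_k = 1$ you need two further strong edges leaving the cluster---but those edges create new open wedges that prevent $\{(i,j,k)\}$ from being maximal. The small candidates you list fail for essentially this reason, and it is not clear any fixed $H$ avoids it.

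The paper's construction is quite different from your template. It does \emph{not} replicate a fixed gadget; instead it uses a single growing graph---a clique $K_{n/2}$ on vertices $v_1,\ldots,v_{n/2}$ together with pendants $u_1,\ldots,u_{n/2}$, where $u_i$ is adjacent only to $v_i$. Here $\mathrm{OPT} = n/2$ (delete the pendant edges). The key idea is the \emph{choice} of the maximal wedge set: take the cyclic family $W = \{(v_i, u_{i+1}, v_{i+1}) : i = 1,\ldots,n/2\}$, so that $E_W$ consists of all pendant edges together with the Hamiltonian cycle $v_1 v_2 \cdots v_{n/2} v_1$ inside the clique. In $\hat G$ each $u_i$ is isolated and each $v_i$ has degree $n/2 - 3$, so by symmetry every pivot strategy behaves identically, and the first pivot among the $v_i$ produces a cluster of size $n/2 - 2$ with cost $3n/2 - 4$. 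The ratio is $3 - 8/n$, which only \emph{asymptotically} reaches $3$; condition (b) in your list is in fact strict by an additive constant here. This scaling construction is what you are missing, and it sidesteps the rigidity of the fixed-gadget approach.
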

\begin{proof}
	For any even integer $n \ge 8$, we can construct a graph of $n$ vertices $\{v_{1}, v_{2}, \ldots, v_{n/2}, u_{1}, u_{2}, \ldots, u_{n/2}\}$.
	The vertex set $\{v_{1}, v_{2}, \ldots, v_{n/2}\}$ is a clique, and for each $i = 1, \ldots, n/2$, the vertex $u_{i}$ is adjacent to only $v_{i}$.  See Figure~\ref{fig:example} for the example when $n = 12$.
	The only optimal solution is
	\[
	E_D = \{(v_{1}, u_{1}), (v_{2}, u_{2}), \ldots, (v_{n/2}, u_{n/2})\}
	\]
	and it has cost $n/2$.
	On the other hand, it is easy to see that 
	\[
	W = \{(v_{1}, u_{2}, v_{2}), (v_{2}, u_{3}, v_{3}), \ldots, (v_{n/2}, u_{1}, v_{1})\}
	\]
	is a maximal edge-disjoint set of open wedges in $G$.
        The set $E_{W}$ is accordingly
        \[
          \{ (u_{1}, v_{1}), (u_{2}, v_{2}), \ldots, (u_{n/2}, v_{n/2}), 
          (v_{1}, v_{2}), (v_{2}, v_{3}), \ldots, (v_{n/2}, v_{1})\}.
        \]
	In $\hat G = (V, E - E_{W})$, for all $i=1,\ldots,n/2$, the vertex $u_{i}$ is isolated, and the vertex $v_{i}$ has $n/2 - 3$ neighbors.
	When applying Algorithm~\ref{alg:mfp}, whatever the pivot strategy is, the first pivot in $\{v_{1}, v_{2}, \ldots, v_{n/2}\}$ decides the solution.  The solution has cost
	\[
	n/2 + 2 ( n/2 - 2) = 3 n/2 - 4.
	\]
	Thus, the ratio is asymptotically three.
\end{proof}

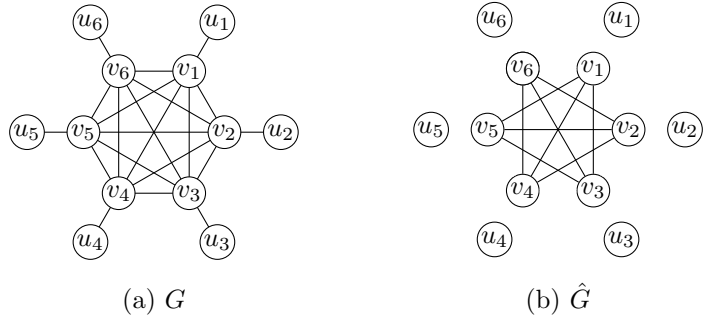
\begin{figure}[t]
	\tikzstyle{empty vertex}  = [{circle, draw, fill = white, inner sep=.5pt, minimum width=1.5pt}]
	\centering \small
	\begin{tikzpicture}[scale=.75]
		\def\n{6}
		\def\radius{1.25}
		\coordinate (v0) at ({90 + 180 / \n}:\radius) {};
		\foreach \i in {1,..., \n} {
			\pgfmathsetmacro{\angle}{90 - (\i - .5) * (360 / \n)}
			\coordinate (v\i) at (\angle:\radius) {};
			
			\draw(v\i) -- (\angle:{\radius+1.}) node[empty vertex] (u\i) {$u_{\i}$};
			\pgfmathsetmacro{\p}{\i - 1}        
			\foreach \j in {1, 2, ..., \p}
			\ifthenelse{\i>1}{\draw (v\i) -- (v\j)}{};
		}
		\foreach \i in {1,..., \n}
		\node[empty vertex] at (v\i) {$v_{\i}$};
		
		\node at (0, -3) {(a) $G$};      
	\end{tikzpicture}
	\qquad\qquad
	\begin{tikzpicture}[scale=.75]
		\def\n{6}
		\def\radius{1.25}
		\node[empty vertex] (v0) at ({90 + 180 / \n}:\radius) {$v_6$};
		\foreach \i in {1,..., \n} {
			\pgfmathsetmacro{\angle}{90 - (\i - .5) * (360 / \n)}
			\node[empty vertex] (v\i) at (\angle:\radius) {$v_\i$};
			
			\node[empty vertex] (u\i) at (\angle:{\radius+1.}) {$u_{\i}$};
			\pgfmathsetmacro{\p}{\i - 1}        
			\foreach \j in {1, 2, ..., \p}
			\ifthenelse{\i>1}{\draw (v\i) -- (v\j)}{};
		}
		\foreach \i in {1,..., \n} {
			\draw[white, ultra thick] let \n1 = {int(\i - 1)} in (v\n1) -- (v\i);
		}
		\node at (0, -3) {(b) $\hat G$};
	\end{tikzpicture}
	\caption{The example for Theorem~\ref{lem:mfp-best}.}
	\label{fig:example}
\end{figure}

\subsection{Rounding the STC LP Relaxation}
\label{sec:stc_lp_3_aprox}

\begin{algorithm}[t]
	\caption{\textsc{$\textsc{STC-LP-round}(G = (V,E))$}}
	\label{alg:stclp_rounding}
	\begin{algorithmic}[1]
		\STATE \(\{x_{ij}\}_{ij \in E} \gets\) (half-integral) solution to STC LP 
		\STATE Set \(E_W \gets \{ (i,j) \in E \colon x_{ij} \in \{1/2, 1\} \}\)
		\STATE 	Form $\hat{G} = (V,E - E_W)$
		\STATE Run $\text{Pivot}(\hat{G})$ \hfill\texttt{// for some choice of pivot strategy}
	\end{algorithmic}
\end{algorithm}
Recall that the LP relaxation of the BLP in~\eqref{eq:minstc_blp} provides a natural lower bound for both \minstc and \cd. Since every basic feasible solution of~\eqref{eq:minstc_blp} is half integral, we can obtain a set of variables $\{x_{ij}\}$ in polynomial time with $x_{ij} \in \{0, 1/2, 1\}$ for every $(i,j) \in E$, that minimizes~\eqref{eq:minstc_blp}. Given this solution, define $E_1 = \{(i,j) \in E \colon x_{ij} = 1\}$ and $E_h = \{(i,j) \in E \colon x_{ij} = 1/2\}$, and note that $E_W = E_1 \cup E_h$ defines an STC-labeling that is a 2-approximation for \minstc. We also refer to edges in $E_h$ as ``half-edges.'' Define $E_S = \{(i,j) \in E \colon x_{ij} = 0\} = E - E_W$ to be strong edges.
\citet{veldt2022correlation} showed that with randomized pivot nodes, Algorithm~\ref{alg:stclp_rounding} has an expected approximation ratio 4, and can be derandomized using the techniques of~\citet{vanzuylen2009deterministic}.\footnote{The deterministic STC-LP algorithm of~\citet{veldt2022correlation} incorporates LP values $\{x_{ij}\}$ more directly in choosing pivot nodes and is different from running Algorithm~\ref{alg:stclp_rounding} with Pivot Strategy 2. The latter provides a simplified and unified approach for rounding both types of \cd lower bounds we consider.} Mirroring Theorem~\ref{thm:mfp}, we provide an improved approximation analysis and show that our (simpler and faster) degree-based pivoting also gives a deterministic 3-approximation.
\begin{theorem}
	\label{thm:stclp}
	When using Pivot Strategy 1 or 2 on $\hat{G}$, Algorithm~\ref{alg:stclp_rounding} is a deterministic $3$-approximation for \cd. When selecting pivots uniformly at random, it is a randomized 3-approximation algorithm.
\end{theorem}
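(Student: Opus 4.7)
The plan is to mirror the proof of Theorem~\ref{thm:mfp}, injecting one new structural ingredient that exploits the half-integrality of the STC LP. Let $m_W$ denote the number of weak edges between clusters and $m_S$ the number of strong edges (those in $E_S$) between clusters. Because $E_W = E_1 \cup E_h$ is a valid STC labeling, Observation~\ref{lem:cliques} guarantees that the clusters produced by Pivot on $\hat{G}$ are cliques in $G$, so the set $\mathcal{N}$ of non-edges of $\hat{G}$ that land inside clusters consists exactly of the within-cluster weak edges, giving $|\mathcal{N}| = |E_W| - m_W$. Lemma~\ref{lem:piv} then yields $m_S \leq 2(|E_W| - m_W)$ under Pivot Strategies 1 and 2 (and the analogous identity in expectation under Strategy 3), so $m_W + m_S \leq 2|E_W| - m_W = 2|E_1| + 2|E_h| - m_W$.

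To upgrade this to a factor of $3$ against the LP value $|E_1| + \tfrac{1}{2}|E_h|$, the key additional bound I need is $m_W \geq \tfrac{1}{2}|E_h|$. Let $m_h^{\mathrm{out}}$ denote the number of half-edges between clusters; trivially $m_W \geq m_h^{\mathrm{out}}$. For any wedge $(i,j,k)$ of $G$ with both $(i,k),(j,k) \in E_h$, the non-edge $(i,j) \notin E$ forces $i$ and $j$ into different clusters, since clusters are cliques in $G$. Hence at least one of $(i,k),(j,k)$ is between clusters, which means the set of between-cluster half-edges forms a vertex cover of the Gallai subgraph of $G$ induced by $E_h$.

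The crux, and the step I expect to be the main obstacle, is to lower bound the vertex-cover number of this induced Gallai subgraph by $|E_h|/2$. I would obtain this via the Nemhauser--Trotter theorem. The STC LP on $G$ is exactly the vertex-cover LP on the Gallai graph of $G$, and for any basic (and hence half-integral) optimum, Nemhauser--Trotter guarantees that the all-$1/2$ assignment is LP-optimal on the subgraph induced by the $1/2$-valued vertices (here $E_h$). Its LP value is therefore $|E_h|/2$, and so the integer vertex-cover number of the induced Gallai subgraph on $E_h$ is also at least $|E_h|/2$. This yields $m_h^{\mathrm{out}} \geq |E_h|/2$, and crucially this bound holds for every run of Algorithm~\ref{alg:stclp_rounding} regardless of the pivot strategy, since it depends only on Observation~\ref{lem:cliques}.

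Substituting $m_W \geq |E_h|/2$ into the earlier bound gives $m_W + m_S \leq 2|E_1| + \tfrac{3}{2}|E_h| \leq 3\bigl(|E_1| + \tfrac{1}{2}|E_h|\bigr) \leq 3\,\text{OPT}_{CD}$, establishing the deterministic $3$-approximation for Pivot Strategies 1 and 2. For Pivot Strategy 3, Lemma~\ref{lem:piv} provides $\mathbb{E}[m_S] = 2\,\mathbb{E}[|\mathcal{N}|]$ and the half-edge lower bound holds pointwise, so the same chain of inequalities carries over in expectation. Once the Nemhauser--Trotter step is in place, the rest is a direct adaptation of the proof of Theorem~\ref{thm:mfp}.
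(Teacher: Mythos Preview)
Your proof is correct and follows essentially the same route as the paper: both reduce to showing that at least half of the half-edges $E_h$ lie between clusters (for every pivot sequence), then combine this with Lemma~\ref{lem:piv} exactly as you do. The only difference is that where you invoke Nemhauser--Trotter to conclude the vertex-cover LP on the Gallai subgraph induced by $E_h$ has optimum $|E_h|/2$, the paper gives the direct perturbation argument---round between-cluster half-edges up to $1$ and within-cluster half-edges down to $0$, verify feasibility using the same ``at least one half-edge of each all-half wedge is cut'' observation you made, and compare to the LP optimum---which is precisely the Nemhauser--Trotter proof specialized to this situation.
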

\begin{proof}
  We begin by proving that for every choice of pivot nodes in $\hat{G}$, at most half of the edges in $E_h$ will end up inside the clusters formed by Algorithm~\ref{alg:stclp_rounding}.
Consider an arbitrary choice of pivots.
Let ${\mathcal{B}}_h$ be the set of half-edges between clusters, and ${\mathcal{N}}_h$ the set of half-edges inside clusters.
We claim that the following set of variables is still feasible for the STC LP:
  \begin{align*}
    \hat{x}_{ij} = \begin{cases}
      x_{ij} & \text{ if $x_{ij} \in \{0,1\}$}, \\
      1 & \text{ if $x_{ij} \in {\mathcal{B}}$}_h, \\
      0  & \text{ if $x_{ij} \in {\mathcal{N}}$}_h.
    \end{cases}
  \end{align*}
  Consider an arbitrary open wedge $(i,j,k) \in \wedges$.
  Assume without loss of generality that $x_{ik} \ge x_{jk}$.
  If $x_{ik} = 1$, then $\hat x_{ik} = 1$ and $\hat{x}_{ik} + \hat{x}_{jk} \geq 1$.
  Otherwise, $x_{ik} = x_{jk} = 1/2$. 
  The three nodes in $W$ must be separated into at least two clusters.
  Thus, at least one of $x_{ik}$ and $x_{jk}$ is in $\mathcal{B}_h$, and hence $\hat{x}_{ik} + \hat{x}_{jk} \geq 1$.
  From the optimality of $\{{x}_{ij}\}$ and the feasibility of $\{\hat{x}_{ij}\}$ it follows that
  \begin{align*}
  \sum_{(i,j) \in E} x_{ij} \le \sum_{(i,j) \in E} \hat x_{ij} = \sum_{(i,j) \in E} x_{ij} + \frac{|\mathcal{B}_h|}{2} - \frac{|\mathcal{N}_h|}{2}. 
  \end{align*}
Thus, $|{\mathcal{B}}_h| \geq |{\mathcal{N}}_h|$ and $|{\mathcal{N}}_h| \leq (|{\mathcal{B}}_h| + |{\mathcal{N}}_h|) / 2 = |E_h|/2$. 
	
Let $m_1$ and $m_S$ be the numbers of edges between clusters that are from the sets $E_1$ and $E_{S}$, respectively.
Note that $|\mathcal{B}| = m_{S}$ and $|\mathcal{N}| = |E_1| - m_1 + |\mathcal{N}_h|$.
Using Pivot Strategy 1 or 2, Lemma~\ref{lem:piv} implies that $m_S \leq 2( |E_1| - m_1 + |\mathcal{N}_h|)$. We can then bound the total number of edges between clusters in $G$:
\begin{align*}
  m_1 + &|\mathcal{B}_h| + m_S \\
        &\leq m_1 + |\mathcal{B}_h| + 2(|E_1| - m_1 + |\mathcal{N}_h|) \\
        &= 2 |E_1| - m_1 + |\mathcal{B}_h| + 2 |\mathcal{N}_h| \\
        &\leq 2 |E_1| + \frac{3}{2} |E_h| \\
        &= \sum_{(i,j) \in E_1} 2x_{ij} + \sum_{(i,j) \in E_h} 3x_{ij} \\
        &\leq 3 \sum_{(i,j) \in E} x_{ij} \\
        & \leq 3 \text{OPT}_\mathit{CD}.
\end{align*}
Lemma~\ref{lem:piv} can similarly be used to show the same result in expectation for random pivot nodes.
\end{proof}


\section{Faster Algorithms for Lower Bounds}
In addition to our improved approximation analysis, we provide faster algorithms for computing a maximal edge-disjoint sets of open wedges and for solving the STC LP. 

\subsection{Maximal Edge-Disjoint Open Wedge Set}
A simple existing approach for finding a maximal edge-disjoint open wedge set is to iterate through each node $k \in V$, and then iterate through pairs $\{i,j\}$ of neighbors of $v$. If $(i,j,k)$ is an open wedge and edge-disjoint from previously explored open wedges, we can add it to a growing set~$W$ of open wedges. We maintain a list~$E_W$ of edges that come from wedges in~$W$. This can be implemented in ${O}(\sum_{k \in V} d_k^2)$-time, which is always larger than $|\mathcal{W}|$ and can be as large as ${O}(nm)$.  While this is already fast in practice, we can further improve the theoretical runtime.
\begin{lemma}
\label{lem:owset}
A maximal edge-disjoint set of open wedges can be found in ${O}(m^{1.5})$ time and $O(m)$ space.
\end{lemma}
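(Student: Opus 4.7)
The plan is to iterate over each vertex $k \in V$, treating it as a potential wedge center, and at each center greedily build up a locally maximal set of edge-disjoint open wedges centered at $k$. I maintain a boolean array marking which edges are still ``free'' (not yet used in any wedge of $W$) and assume a hash-set adjacency representation of $G$ supporting $O(1)$ edge queries; both fit in $O(m)$ space. At center $k$, I collect the set $U$ of currently free neighbors of $k$ into a doubly linked list and repeat: pick an arbitrary $i \in U$, scan $U \setminus \{i\}$ with adjacency queries until either (a) a vertex $j$ with $(i,j) \notin E$ is found, in which case add the wedge $(i,j,k)$ to $W$, mark $(i,k)$ and $(j,k)$ as used, and remove $i, j$ from $U$; or (b) the scan exhausts $U$, in which case remove $i$ from $U$ and call it ``stuck.''

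For maximality, I argue by contradiction: suppose $(i,j,k)$ is an open wedge in $G$ with both $(i,k)$ and $(j,k)$ still free at termination. Since neither edge was ever used, both $i$ and $j$ were present in $U$ at the start of processing $k$, and both must have been removed as stuck (a match would have marked one of their two edges used). Consider whichever of $i,j$ was removed first; at that moment the other was still in $U$ and, because $(i,j)\notin E$, was a non-neighbor of the removed vertex. The scan would therefore have paired them and taken branch (a), not (b) --- contradicting the ``stuck'' removal. Hence $W$ is a maximal edge-disjoint open-wedge set.

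For the runtime, the scan searching for a non-neighbor of $i$ in $U$ costs $O(\min(|U|, d_i)+1)$: if $|U| > d_i + 1$, then among any $d_i + 1$ probed vertices at least one must be a non-neighbor of $i$ (so a non-neighbor is found in that many $O(1)$ queries), while otherwise the scan is bounded by $|U| \le d_k$. Since each neighbor of $k$ is processed at most once and $|U| \le d_k$ throughout, the total work at center $k$ is $O(d_k) + \sum_{i \in N(k)} \min(d_i, d_k)$. Summing across all centers (each edge contributing twice to the inner sum) gives total work $O(m) + 2\sum_{(u,v)\in E}\min(d_u,d_v) = O(m^{1.5})$, where the classical bound follows by orienting each edge from lower-degree to higher-degree endpoint: every vertex then has out-degree $O(\sqrt{m})$, so $\sum_{(u,v)\in E} \min(d_u,d_v) = \sum_v d_v \cdot \mathrm{out\text{-}deg}(v) = O(\sqrt{m})\sum_v d_v = O(m^{1.5})$. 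Space usage is $O(m)$ for the adjacency structure, the free-edge marks, the current $U$, and the output $W$.

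The main obstacle I anticipate is the maximality proof: the algorithm only performs local greedy decisions at each center, and one needs to rule out that an open wedge can slip through by having both of its non-center endpoints declared stuck at the same center. The symmetry argument above handles this cleanly, after which the runtime analysis is just the standard degeneracy-style charging familiar from triangle enumeration.
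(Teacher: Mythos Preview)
Your proposal is correct and achieves the stated bounds, but it takes a genuinely different route from the paper's proof. Both approaches iterate over potential centers $k$ and maintain a list of currently free neighbors of $k$, but the inner loop and the charging argument differ. The paper performs a systematic two-index sweep over \emph{pairs} of free neighbors of $k$, using a linked ``next'' array to skip over deleted edges; its runtime analysis charges each unsuccessful probe to a triangle of $G$ (each triangle is visited at most three times, and there are $O(m^{1.5})$ triangles). Your algorithm instead repeatedly fixes a single pivot $i$ and scans $U$ for a non-neighbor, bounding each scan by $\min(d_i,d_k)+1$ and then invoking the classical degeneracy bound $\sum_{(u,v)\in E}\min(d_u,d_v)=O(m^{1.5})$. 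Your inner loop is somewhat simpler to implement (no pair-index bookkeeping), and your charging argument makes the dependence on arboricity explicit: it immediately yields $O(m\cdot\alpha(G))$ time, hence linear time on bounded-arboricity classes such as planar graphs. The paper's triangle-based charging ultimately gives the same refinement (the triangle count is also $O(m\cdot\alpha(G))$), but your route to it is more direct. Your maximality argument via the ``first-stuck'' contradiction is clean and correct; it is the main place where your analysis has to do work that the paper's pair-enumeration avoids by construction.
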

The appendix provides pseudocode and a full analysis for an algorithm satisfying these bounds. Similar to the previous approach, our procedure starts by iterating through nodes $k \in V$ and then iterates through pairs of neighbors of $k$. The key observation is that as soon as we encounter an open wedge $(i,j,k)$ and add its two edges $(i,k)$ and $(j,k)$ to $E_W$, we can effectively ``delete'' these edges and avoid exploring triplets involving them in future iterations. Since $|W|$ is always at most $m/2$, the total amount of work for adding edges to $E_W$ and then deleting them is bounded by $O(m)$. Now, when visiting two neighbors $\{i,j\}$ of $k$ we may find that $\{i,j,k\}$ is a triangle rather than an open wedge. We therefore have nothing to add to $W$ and no edges to delete. However, the amount of work that goes into finding these unwanted triangles is bounded in terms of the number of triangles in the graph, which is known to be at most $O(m^{1.5})$.
The result is inspired by the recent work of~\citet{cao2024breaking}, who provided the same type of bound for finding a maximal set of disjoint \emph{open triangles} in a complete signed graph. See the appendix for further details on the similarities and differences between these problems and approaches.

\subsection{Combinatorial Solver for the STC LP}\label{sec:stc_lp_combinatorial}
Although the STC LP rounding algorithm produces the same 3-approximation guarantee as MFP, the LP relaxation produces a tighter lower bound for \cd and tends to produce better solutions in practice. However, previous implementations rely on simply applying black-box LP solvers, which become a bottleneck both in terms of runtime and memory requirements~\cite{veldt2022correlation}. In this section we present a faster and purely combinatorial approach for solving the LP by reducing it to a minimum $s$-$t$ cut problem. This can be accomplished by first proving that the half-integral STC LP can be cast as a so-called monotone IP2 problem---an integer program with two variables per constraint with opposite signed coefficients. This can in turn be cast as a maximum closure problem~\cite{picard_maximal_1976} and then reduced to a maximum $s$-$t$ flow problem following the approach of~\citet{hochbaumApplications2021}. The reduction we present in this section merges several of these steps in order to provide a simplified and more direct reduction for the STC LP. 

\paragraph{Converting to BLP} 
The STC LP relaxation is obtained by replacing constraint $x_{ij} \in \{0,1\}$ in~\eqref{eq:minstc_blp} with $x_{ij} \ge 0$.
Since every basic feasible solution of the LP is half-integral, we can equivalently optimize over variables $x_{ij} \in \{0,1/2, 1\}.$ We will show that this is equivalent to the following BLP:
\begin{equation}\label{eq:ip}
	\begin{array}{lll}
		\min        & \displaystyle\sum\limits_{(i,j) \in E} \frac12 y_{ij} + \frac12(1- z_{ij}) & \\
		\text{s.t.}   & \left. \begin{aligned}
			z_{ik} \leq y_{jk} \\
			z_{jk} \leq y_{ik} \\
		\end{aligned} \right\} \qquad \forall (i,j,k) \in \wedges_k &\\
		& z_{ij}, y_{ij} \in \{0,1\} \quad \forall (i,j) \in E. &
	\end{array}
\end{equation}
%
We prove the following result.
\begin{lemma}
    \label{lem:equiv}
If $\{y_{ij}, z_{ij} \}_{ij \in E}$ is feasible for~\eqref{eq:ip}, then 
\begin{equation}\label{eq:ip2lp}
	x_{ij} = \frac12(y_{ij} - z_{ij} + 1) \quad \forall(i,j) \in E
\end{equation}
defines a feasible half-integral solution for the STC LP with the same objective value. Conversely, if $\{x_{ij}\}_{ij \in E}$ is a feasible half-integral solution to the STC LP, the variables
\begin{equation}
    \label{eq:LP2IP}
    (y_{ij}, z_{ij}) = 
    \begin{cases}
            (0,1) & \text{ if $x_{ij} = 0$} \\
            (1,1) & \text{ if $x_{ij} = 1/2$} \\   
            (1,0) & \text{ if $x_{ij} = 1$} 
    \end{cases}
    \quad \forall(i,j) \in E
    \end{equation} 
    are feasible for~\eqref{eq:ip} and have the same objective value.
\end{lemma}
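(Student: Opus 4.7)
The plan is to verify the equivalence in both directions by direct substitution, exploiting the algebraic identity that connects the two formulations. The starting observation is that for $y, z \in \{0,1\}$, the quantity $\tfrac12(y - z + 1)$ takes values precisely in $\{0, 1/2, 1\}$, matching the possible values of a half-integral STC LP solution. Moreover, the per-edge objective in~\eqref{eq:ip} rewrites as
\[
\tfrac12 y_{ij} + \tfrac12(1 - z_{ij}) \;=\; \tfrac12(y_{ij} - z_{ij} + 1),
\]
so under the correspondence~\eqref{eq:ip2lp} the objectives match term by term in both directions. This reduces the lemma to a feasibility check.

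For the forward direction, I would assume $\{y_{ij}, z_{ij}\}$ is feasible for~\eqref{eq:ip}, define $x_{ij}$ as in~\eqref{eq:ip2lp}, and note that $x_{ij} \in \{0, 1/2, 1\}$ automatically gives $x_{ij} \ge 0$. For the STC constraint corresponding to an open wedge $(i,j,k) \in \wedges$, a single line suffices:
\[
x_{ik} + x_{jk} \;=\; \tfrac12\bigl((y_{ik} - z_{jk}) + (y_{jk} - z_{ik}) + 2\bigr) \;\ge\; 1,
\]
where the inequality uses the two constraints $z_{jk} \le y_{ik}$ and $z_{ik} \le y_{jk}$ from~\eqref{eq:ip}. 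So each pair of IP constraints on a wedge maps exactly to the corresponding STC LP constraint.

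For the reverse direction, I would start from a half-integral feasible $\{x_{ij}\}$, define $(y_{ij}, z_{ij})$ by the case rule~\eqref{eq:LP2IP}, observe $y_{ij}, z_{ij} \in \{0,1\}$, and then verify $z_{ik} \le y_{jk}$ and $z_{jk} \le y_{ik}$ for each open wedge $(i,j,k)$. The cleanest route is by contradiction: inspecting~\eqref{eq:LP2IP}, $z_{ab} = 1$ happens exactly when $x_{ab} \in \{0, 1/2\}$ and $y_{ab} = 0$ happens exactly when $x_{ab} = 0$. So if $z_{ik} > y_{jk}$, we would have $z_{ik} = 1$ and $y_{jk} = 0$, giving $x_{ik} \le 1/2$ and $x_{jk} = 0$, hence $x_{ik} + x_{jk} \le 1/2 < 1$, contradicting STC LP feasibility. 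The same argument handles $z_{jk} \le y_{ik}$ by symmetry.

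The main ``obstacle'' is minor and purely notational: keeping the case rule~\eqref{eq:LP2IP} straight while checking the wedge constraints. No hard inequality or combinatorial argument is needed, since the IP and LP live on the same half-integral values connected by a direct substitution, and the two IP constraints per wedge are essentially a half-integral decomposition of the single LP constraint.
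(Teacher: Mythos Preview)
Your proposal is correct and follows essentially the same approach as the paper: both directions are handled by direct algebraic verification, and your forward-direction computation is identical to the paper's. The only cosmetic difference is in the reverse direction, where the paper argues by a short case split on $\max\{x_{ik},x_{jk}\}\in\{1/2,1\}$ rather than your contradiction argument; the content is the same.
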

\begin{proof}
	First, suppose that $\{y_{ij}, z_{ij} \}_{ij \in E}$ is feasible for \eqref{eq:ip}.  Then
	\begin{align*}
		x_{ik}  + x_{jk} &= \frac12(y_{ik} - z_{ik} + 1) + \frac12(y_{jk} - z_{jk} + 1)
		\\
          &= \frac12(y_{jk} - z_{ik}) + \frac12(y_{ik} - z_{jk}) + 1 \\
                                  &\ge 1.
	\end{align*}
	Thus, $\{x_{ij}\}_{ij \in E}$ is feasible for the STC LP. It is obviously half-integral and the objective value is the same.
	
	Now suppose that $\{x_{ij}\}_{ij \in E}$ is a feasible half-integral solution to the STC LP.  Consider an open wedge $(i, j, k)$.  Assume without loss of generality that $x_{ik} \ge x_{jk}$. 
	
	\begin{itemize}
		\item   If $x_{ik} = 1/2$, then $x_{jk} = 1/2$.  By \eqref{eq:LP2IP}, we have
		$y_{i k}= z_{jk} = 1$ and $y_{j k}= z_{ik} = 1$.
		\item   
		If $x_{ik} = 1$, then by \eqref{eq:LP2IP}, we have
		$y_{i k}= 1 \ge z_{jk}$ and $y_{j k} \ge z_{ik} = 0$.
	\end{itemize}
	Thus, $\{y_{ij}, z_{ij} \}_{ij \in E}$, which is integral, is feasible for \eqref{eq:ip}.  Since $\frac12(y_{ij} + 1 - z_{ij}) = x_{ij}$ for all the possible values of $x_{ij}$, the objective value is the same.
\end{proof}
We can therefore turn our attention to solving \eqref{eq:ip}, and use Eq.~\eqref{eq:ip2lp} to convert back to a solution to the STC LP.

\paragraph{Casting as a minimum $s$-$t$ cut problem}
The BLP in~\eqref{eq:ip} is equivalent to a minimum $s$-$t$ cut problem on a graph $G_{st}$. To construct the graph, we introduce a source node $s$ and a sink node $t$. Then for each $(i,j) \in E$, we introduce a node $Y_{ij}$ and a node $Z_{ij}$, then construct two directed edges $(s, Z_{ij})$ and $(Y_{ij}, t)$ each with weight $1/2$. Finally, for every $(i,j,k) \in \mathcal{W}_k$, we introduce two infinite-weight directed edges: $(Z_{ik}, Y_{jk})$ and $(Z_{jk}, Y_{ik})$.

Every $s$-$t$ cut in $G_{st}$ corresponds to a feasible solution to~\eqref{eq:ip}, where the weight of cut edges in $G_{st}$ equals the resulting objective score for~\eqref{eq:ip}. In more detail, nodes $Y_{ij}$ and $Z_{ij}$ correspond to binary variables $y_{ij}$ and $z_{ij}$ in~\eqref{eq:ip}. Setting a binary variable to 1 corresponds to placing its node on the $s$-side of the cut. For example, setting $y_{ij} = 1$ means placing $Y_{ij}$ on the $s$-side, which cuts the edge $(Y_{ij}, t)$. This contributes a $1/2$ to the cut penalty of $G_{st}$, just as setting $y_{ij} = 1$ contributes $1/2$ to the objective of~\eqref{eq:ip}. A similar penalty arises from setting $z_{ij} = 0$, which is equivalent to placing $Z_{ij}$ on the $t$-side of an $s$-$t$ cut in $G_{st}$. The infinite weight edges in $G_{st}$ encode the constraints of~\eqref{eq:ip}. The edge $(Z_{ik}, Y_{jk})$ has infinite weight to ensure that if $Z_{ik}$ is on the $s$-side of the cut, then $Y_{jk}$ is as well, just as $z_{ik} = 1$ forces $y_{jk} = 1$, required by the constraint $z_{ik} \leq y_{jk}$. 

To understand how this relates to \minstc, note that an edge $(s,Z_{ij})$ encourages $z_{ij}$ to be 1 and edge $(Y_{ij},t)$ encourages $y_{ij}$ to be zero. If both these preferences are satisfied, then $x_{ij} = 0$ meaning that $x_{ij}$ is a strong edge. If neither preference is satisfied, then $x_{ij} = 1$ and the edge is weak, whereas satisfying one preference but not the other leads to $x_{ij} = 1/2$. The reason these preferences typically cannot all be satisfied is because an edge $(Z_{jk}, Y_{ik})$ indicates that if $z_{jk} = 1$, this forces $y_{ik} = 1$.

\subsection{Runtime and Space Analysis} 
\label{sec:runtime}
We briefly summarize several improvements in runtime and space requirements that are obtained using our new techniques. More details for proving these bounds are included in the appendix. Lemma~\ref{lem:owset} improves the runtime for computing the MFP lower bound to ${O}(m^{1.5})$. This is always at least as fast as the previous ${O}(mn)$ runtime and is strictly faster for sparse graphs.
The previous deterministic pivoting scheme (Pivot Strategy 2) has space and runtime requirements that are $\Omega(m + |\mathcal{W}|)$, which can be $O(n^3)$ in the worst case, whereas degree-based pivoting can be implemented in $O(m)$ time and space. The most expensive step of Algorithm~\ref{alg:stclp_rounding} is solving the STC LP. Hence, the runtime for our combinatorial STC LP solver is also the asymptotic runtime for Algorithm~\ref{alg:stclp_rounding}. Using our reduction to minimum $s$-$t$ cut, we can can get a randomized solver that runs in $(m + |\mathcal{W}|)^{1 + o(1)}$ time by applying recent nearly linear time algorithms for maximum $s$-$t$ flows~\cite{chen2022maximum}. Using the algorithm of~\citet{goldberg1998beyond} we can get a deterministic algorithm with runtime $\tilde{O}(\min \{ (m + |\mathcal{W}|)^{1.5}, (m + |\mathcal{W}|)\cdot m^{2/3}$). For comparison, even the best recent theoretical solvers for general LPs would lead to runtimes that are $\Omega((m + |\mathcal{W}|)^2)$~\cite{van2020deterministic,shunhua2021faster,cohen2021solving}.


\section{Experimental Results}
\begin{figure}[t]
	\centering
 \includegraphics[width=.9\linewidth]{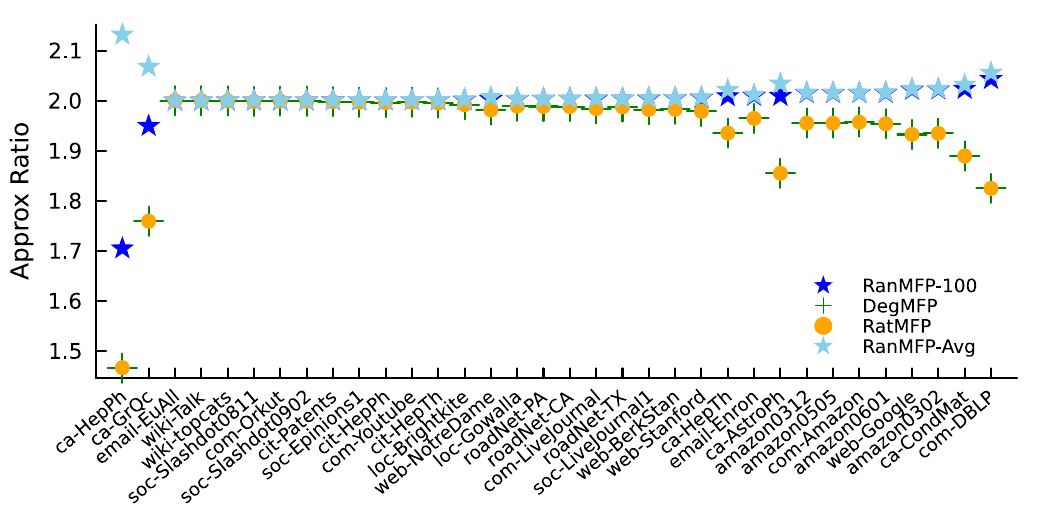}
 \vspace{-0\baselineskip} 
\caption{Approximation ratios ($|E_D|/|W|$) for MFP.}
	\label{fig:mfpratios}
\end{figure}
\citet{veldt2022correlation} previously showed experimental results for the \emph{randomized} variants of Algorithms~\ref{alg:mfp} and~\ref{alg:stclp_rounding} on a large collection of real-world graphs. For these experiments, the STC LP was solved using Gurobi optimization software. 
For our work we implement both deterministic schemes (Pivot Strategies 1 and 2) and compare them against each other and the randomized variant. We also show that our combinatorial approach for solving the STC LP is much faster and scales to much larger graphs than using a black-box LP solver.
Our algorithms are implemented in Julia.
Our experiments were run on a laptop with 16 GB of RAM.
For the most direct comparison with previous work~\cite{veldt2022correlation}, we consider the same collection of large graphs from the SNAP Repository~\cite{snapnets}, the largest of which (soc-Livejournal1) has 4.2 million nodes and 4.7 billion edges. We also run experiments on one even larger graph (com-Orkut) with
117.2 billion edges. The appendix includes additional details about datasets, implementations, and experimental results.

\textbf{Comparing pivot strategies.}
Our degree-based pivoting strategy is fast and practical. In addition to enjoying a deterministic approximation guarantee, it is comparable in speed to choosing random pivot nodes, while achieving much better approximations. Figure~\ref{fig:mfpratios} shows approximation ratios and Figure~\ref{fig:mfprun} shows runtimes achieved by MFP with degree-based pivots (DegMFP), pivots that minimize the ratio $|B_u|/|N_u|$ (RatMFP), and two different approaches to using random pivots. RanMFP-100 runs the random pivot strategy 100 times and takes the best solution found. This is a natural strategy to use since running randomized pivot once is very fast. RanMFP-Avg represents the average performance of the algorithm over these 100 trials.
\begin{figure}[t]
	\centering
	\includegraphics[width=.5\linewidth]{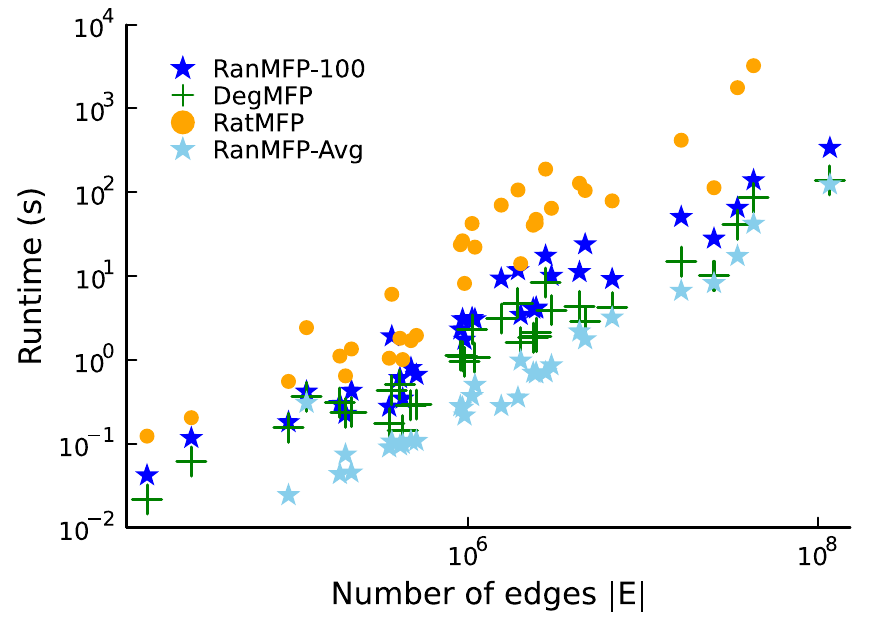}
	\caption{Runtimes of the MFP algorithms using different pivoting strategies. Each point represents one graph.}
	\label{fig:mfprun}
\end{figure}

DegMFP is almost identical to RatMFP in terms of approximation ratio, but is faster by an order of magnitude or more. DegMFP finds better solutions that RanMFP-100 and RanMFP-Avg, and is faster than RanMFP-100. Choosing random pivots once is faster (see runtimes for RanMFP-Avg), but especially for larger graphs DegMFP has comparable runtimes. A benefit of DegMFP is that by choosing high-degree nodes, it terminates in fewer pivot steps.

Most approximation ratios achieved by MFP are very close to 2. This can be explained by noting that the method labels a large percentage of edges as weak---between 63.4\% and 99.7\% for graphs in Figure~\ref{fig:mfpratios}. As a result, MFP deletes nearly all edges for some graphs, which is roughly a 2-approximation since $|E| \approx 2 |W|$. It is especially interesting to observe the behavior of different pivoting strategies when fewer edges are labeled weak and approximations factors deviate more from 2 (left- and right-most graphs in Figure~\ref{fig:mfpratios}). In these cases, RanMFP tends to have approximations that are worse than 2, while the deterministic schemes perform the best in these cases and detect more meaningful clusters. This highlights the utility of having a very fast deterministic rounding scheme for \cd algorithms. 

\begin{figure}[t]
	\centering
\includegraphics[width=.9\linewidth]{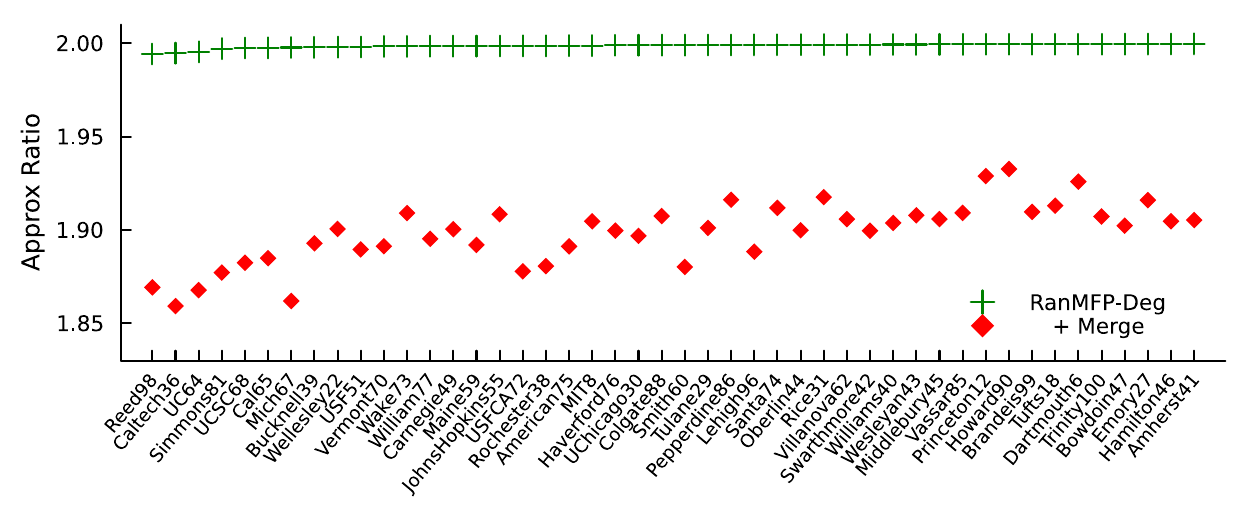}
\vspace{-0\baselineskip} 
	\caption{Improved approximation ratios when incorporating a cluster merging step after DegMFP.}
	\label{fig:mergeratios}
	\vspace{-0\baselineskip} 
\end{figure}
\textbf{Cluster merging heuristic.}
Figure~\ref{fig:mergeratios} shows results for DegMFP on graphs from the Facebook100 dataset~\cite{traud2012facebook} with up to 351k edges. For these graphs, finding an edge-disjoint open wedge set labels between 99.6\% and 99.95\% of edges as weak. MFP essentially achieves a 2-approximation by deleting nearly all edges. We also implement a heuristic for checking when clusters output by MFP can be merged into larger cliques (see appendix for details). 
Our current implementation is a proof of concept and not optimized for runtime. Nevertheless, this leads to noticeably better approximation ratios for these graphs, as well as others where a smaller percentage of edges are labeled weak (see appendix). Developing more scalable techniques for improving MFP is a promising direction for future research.


\begin{figure}[t]
	\centering
	\includegraphics[width=.5\linewidth]{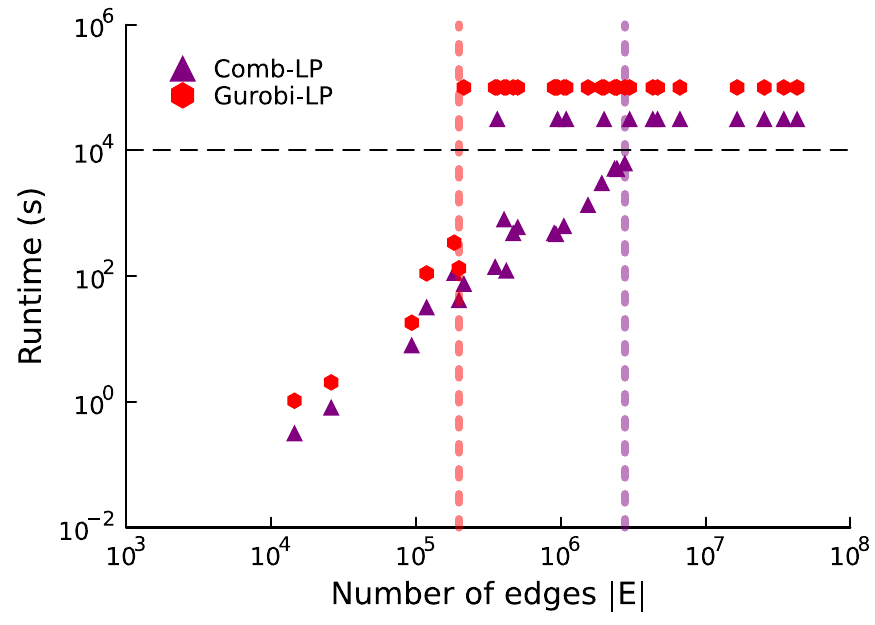}
	\caption{Runtimes of two different solvers for the STC LP. Each point represents a graph. Points above the black dashed line indicate graphs for which the given STC LP solver did not find a solution. The two vertical dashed lines indicate the size of the largest graph (in terms of edges) for which each method was able to successfully solve the LP.}
	\label{fig:lprun}
\end{figure}

\textbf{STC LP solvers.}
Our combinatorial solver for the STC LP enables us to run Algorithm~\ref{alg:stclp_rounding} more quickly and on a much larger scale than was previously possible. Figure~\ref{fig:lprun} shows runtimes for solving this LP using our combinatorial min-cut approach (Comb-LP) versus using general-purpose Gurobi optimization software (Gurobi-LP). The resulting objective score is the same for both since they are both finding an optimal solution for the LP. The main bottleneck of both algorithms is memory. For smaller graphs where both algorithms terminate without memory overflow, our combinatorial approach is roughly twice as fast. Overall, Gurobi-LP can only solve the STC LP on 6 of 34 graphs, while Comb-LP can solve it for 21 of 34.
The largest graph for which Gurobi finds a solution has 34,546 nodes and 420,877 edges, while our combinatorial approach was able to find solutions for many of the larger graphs, the largest of which has 1,971,281 nodes and 2,766,607 edges. Thus, in addition to being faster, this approach allows us to tackle problems that are an order of magnitude larger.

\section{Conclusion}
We have developed two combinatorial approximation algorithms for the \textsc{Cluster Deletion} problem, with a common degree-based pivoting strategy for derandomization.
We managed to show that both algorithms have an approximation guarantee of 3.
While the analysis for MatchFlipPivot is tight, it is not for the other.
One open question is whether the ratio can be improved by a tighter analysis.
Another major open problem is obtaining more efficient implementations of our algorithms.
For example, can we exploit the special structure of the minimum $s$-$t$ cut problem to avoid calling an off-the-shelf algorithm?
Another interesting direction for future work is to see whether we can adapt these ideas to obtain faster approximation algorithms for \textsc{Cluster Editing}.

\bibliography{main}
\bibliographystyle{plainnat}
	\appendix

\section{Finding a Maximal Open Wedge Set}
\label{app:ow}
\begin{algorithm}[t]
	\caption{Faster maximal edge-disjoint open wedge set}
	\label{alg:fow}
	\begin{algorithmic}[1]
 \STATE 	{\bf Input.} Graph $G = (V,E)$
\STATE {\bf Output.} Maximal edge-disjoint set of open wedges $W \subseteq \mathcal{W}(G)$
\STATE $W = \emptyset$   \hfill\texttt{// edge-disjoint open wedge set}
\STATE $E_W = \emptyset$ \hfill \texttt{ //edges within W, will be labeled "weak"}
\FOR{$v \in V$} 
\STATE $S_v \leftarrow$ array of neighbors $u$ of $v$ such that $(u,v) \notin E_W$
\IF{$|S_v| < 2$}
\STATE continue \hfill \hfill\texttt{ // no open wedges centered at $v$}
\ENDIF
\STATE $\mathit{next} = [2, 3, \cdots , |S_v|, \text{NIL}]$  \hfill \texttt{ // $\mathit{next}[t] = $ neighbor of $v$ to visit after $S_v[t]$}
\STATE $i = 1$; $j = 2$; $j_\mathit{old} = 1$
\STATE $\mathit{finished} = \text{FALSE}$
\WHILE{$\mathit{finished} = \text{FALSE}$}
\STATE $u = S_v[i]$; $w = S_v[j]$ \hfill \texttt{// $(u, w, v)$ is the current wedge under consideration}
\IF{$(u, w) \in E$}
\STATE \texttt{// triangle found; do normal increment}
\STATE $[i,j,j_\mathit{old},\mathit{finished}] = \textsc{TriangleIncrement}(i,j,j_\mathit{old},\mathit{next}, \mathit{finished})$
\ELSE
\STATE \texttt{// open wedge found; do special increment}
\STATE $E_W \leftarrow E_W \cup \{(u,v), (w,v) \}$
\STATE $W \leftarrow W \cup \{(u,w,v) \}$
\STATE $[i,j,j_\mathit{old},\mathit{next},\mathit{finished}] = \textsc{OpenWedgeIncrement}(i,j,j_\mathit{old},\mathit{next}, \mathit{finished})$
\ENDIF
\ENDWHILE
\ENDFOR
\STATE \textbf{Return} $W$
\end{algorithmic}
\end{algorithm}

\begin{algorithm}[t]
	\caption{$[i,j,j_\mathit{old},\mathit{finished}] = \textsc{TriangleIncrement}(i,j,j_\mathit{old},\mathit{next}, \mathit{finished})$}
	\label{alg:triangleincrement}
	\begin{algorithmic}[1]
		\IF{$\mathit{next}[j] \neq \text{NIL}$}
		\STATE $j_\mathit{old} = j$ \hfill \texttt{// record $j_\mathit{old}$ such that $\mathit{next}[j_\mathit{old}] == j$}
		\STATE  $j = \mathit{next}[j]$ \hfill \texttt{ // increment $j$}
		\ELSIF{$\mathit{next}[j] == \text{NIL}$ and $\mathit{next}[i] \neq j$}
		\STATE $i = \mathit{next}[i]$\hfill \texttt{// increment $i$}
		\STATE $j = \mathit{next}[i]$  \hfill \texttt{// $i$ comes right after}
		\STATE $j_\mathit{old} = i$
		\ELSE
		\STATE $\mathit{finished} = \text{TRUE}$ \hfill \texttt{// done exploring neighbors since  $j = \mathit{next}[i]$ and $\mathit{next}[j] == \text{NIL}$}
		\ENDIF
		\STATE \textbf{Return} $i,j,j_\mathit{old},\mathit{finished}$
	\end{algorithmic}
\end{algorithm}

\begin{algorithm}[t]
	\caption{$[i,j,j_\mathit{old},\mathit{next},\mathit{finished}] = \textsc{OpenWedgeIncrement}(i,j,j_\mathit{old},\mathit{next}, \mathit{finished})$}
	\label{alg:owincrement}
	\begin{algorithmic}[1]
		\STATE $\mathit{next}[j_\mathit{old}] = \mathit{next}[j]$ \hfill \texttt{ // skip past $(v,v_j)$ in the future}
		\STATE $i = \mathit{next}[i]$ \hfill \texttt{// increment $i$}
		\IF{$i == \text{NIL}$ or $\mathit{next}[i] == \text{NIL}$}
		\STATE $\mathit{finished} = \text{TRUE}$
		\ELSE 
		\STATE $j = \mathit{next}[i]$
		\STATE $j_\mathit{old} = i$
		\ENDIF
		\STATE \textbf{Return} $i,j,j_\mathit{old},\mathit{next},\mathit{finished}$
	\end{algorithmic}
\end{algorithm}

\begin{algorithm}[t]
	\caption{Combinatorial solver for STC LP}
	\label{alg:comb-stc-lp}
	\begin{algorithmic}[1]
 \STATE 	{\bf Input.} Graph \(G = (V,E)\)
\STATE {\bf Output.} Optimal solution to STC LP
		\STATE Initialize graph $G_{st}$ with two nodes $V_{st} =\{s,t\}$
		\FOR{\((i,j) \in E\)}
            \STATE Add nodes $Z_{ij}, Y_{ij}$ to $V_{st}$
            \STATE Add edges $(s,Z_{ij})$ and $(Y_{ij},t)$ with weight $1/2$
            \ENDFOR
            \FOR{\((i,j,k) \in \mathcal{W}_k\)}
            \STATE Add edges $(Z_{ik}, Y_{jk})$ and $(Z_{jk}, Y_{ik})$ with weight $\infty$
            \ENDFOR
		\STATE Find min $s$-$t$ cut set $S \subseteq V_{st}$ of $G_{st}$
            \FOR{$(i,j) \in E$}
            \STATE Set $y_{ij} = \chi(Y_{ij} \in S)$, $z_{ij} = \chi(Z_{ij} \in S)$
		\STATE Set $x_{ij} = \frac{1}{2}(y_{ij} + 1 - z_{ij})$
            \ENDFOR
		\STATE \textbf{Return} \(\{x_{ij}\}_{(i,j) \in E}\)
	\end{algorithmic}
\end{algorithm}

Algorithm~\ref{alg:fow} is pseudocode for our ${O}(m^{1.5})$-time algorithm for finding a maximal edge-disjoint set of open wedges.

\textbf{Related work.} 
Our algorithm is loosely inspired by the recent work of~\cite{cao2024breaking}, who provided the same type bound for finding a maximal set of open triangles (i.e., $\{+, +, -\}$ triangles) in a complete signed graph. This is equivalent to finding a \emph{pair}-disjoint set of open wedges in a graph $G = (V,E)$, which is more restrictive than finding an \emph{edge}-disjoint set. The setting considered by~\citet{cao2024breaking} is also slightly different in that they associate lengths to different edges and restrict their search to open triangles satisfying certain length restrictions.
Nevertheless, two key ideas remain the same: (1)~once we add an open wedge to $W$, we never have to explore triplets of nodes involving its edges ever again, and (2)~the amount of time we spend exploring node triplets that turn out to be triangles can be bounded in terms of number of triangles in the graph, which is $O(m^{1.5})$.

\textbf{Algorithm explanation.} 
For our analysis we fix an ordering of the vertices $V = \{v_1, v_2, \hdots v_n\}$. We also assume that for each node $v_{i}, i=1,\ldots,n$, we have an array that stores the neighborhood of $v_{i}$ in this same order. The basic structure of Algorithm~\ref{alg:fow} is outlined in the main text: for each node $v \in V$, iterate through pairs of neighbors $\{u,w\}$ of $v$, checking each time if $(u,w,v)$ is an open wedge that can be added to a growing edge-disjoint open wedge set $W$. The key idea is that we must find a way to effectively ``delete'' the edges $(u,v)$ and $(w,v)$ if we add $(u,w,v)$ to $W$, so that we never waste time visiting another open wedge that involves either of these edges. This will ensure that we can charge all of the work done by the algorithm to an iteration where we add a new open wedge to $W$ (which happens $O(m)$ times), or to an iteration where we ``visit'' a triangle (which will happen $O(m^{1.5})$ times). 

When we first visit a node $v \in V$, we extract an array of neighbors $S_v$ (ordered by node label) with the property that for every node $u$ in $S_v$, $(u,v)$ is not in $E_W$ (the set of edges from the open wedge set $W$). This is the first way in which we ``delete'' edges adjacent to $v$ that we no longer wish to consider. As we iterate through pairs of nodes of $v$, we may encounter new open wedges, leading to other edges we must ``delete'' and skip over as we explore pairs of neighbors of $v$. We keep track of edges to skip over using an array $\mathit{next}$, where $\mathit{next}[t]$ represents the next index in $S_v$ to visit---in other words, the neighbor of $v$ that we should visit directly after visiting the neighbor $S_v[t]$. The array is initialized to $\mathit{next} = [2,3, \hdots, |S_v|, \text{NIL}]$ when we begin iterating through neighbors of $v$, which signifies that we default to visiting node $S_v[i+1]$ after visiting $S_v[i]$. Including NIL at the end of the array tells us when we have reached the end of $v$'s neighbor list.


We maintain two indices $i$ and $j$ that point to distinct neighbors of $v$: $S_v[i]$ and $S_v[j]$. If $\{S_v[i], S_v[j], v\}$ is a triangle, no edges are deleted and we just update indices $i$ and $j$ with help from $\mathit{next}$ to determine which new pair of neighbors of $v$ to explore next. If however this node triplet defines an open wedge, then we delete $(S_v[i], v)$ and $(S_v[j],v)$ by updating $\mathit{next}$ so that we skip over $S_v[i]$ and $S_v[j]$ (see Algorithm~\ref{alg:owincrement}) as we continue to iterate over pairs of neighbors of $v$. With this careful update, we ensure that every time we consider a new pair of neighbors of $v$, it defines either a new open wedge to add to $W$ or a triangle. 

\textbf{Runtime bound.} The time it takes to initialize $S_v$ and $\mathit{next}$ for each $v \in V$ is $O(d_v)$, where $d_v$ is the degree of $v$. The overall runtime for these steps (and all of the work done in lines 6--12 of Algorithm~\ref{alg:owincrement}) is therefore $O(m)$. The bottleneck of the algorithm comes from the while loop in lines 13--24. The total number of times we call \textsc{TriangleIncrement} (line 17) is $O(m^{1.5})$, since each triangle is visited at most 3 times (once for every node in the triangle) over the course of the algorithm. Each call to \textsc{TriangleIncrement} takes $O(1)$ time. We call \textsc{OpenWedgeIncrement} at most $O(m)$ times, which again takes $O(1)$ time each time. Finally, in order to figure out which of these two increment subroutines to call, we must check if there is an edge between two given neighbors of $v$ (line 15). This can be done in $O(1)$ time and $O(n^2)$ space by storing the full adjacency matrix. We improve this to $O(1)$ time and $O(m)$ space by storing all of the edges in a hash table that is queried to check adjacency.

\section{Theoretical Runtime Analysis}
\label{app:runtime}

Here we provide additional details behind the runtime and space analysis in Section~\ref{sec:runtime}. 

\paragraph{MFP algorithms.}
Appendix~\ref{app:ow} provided a detailed analysis to show that we can find a maximal edge-disjoint set of open wedges in ${O}(m^{1.5})$ time and $O(m)$ space. This is always at least as fast as the previous runtime of ${O}(mn)$ and is strictly better for sparse graphs. Computing this lower bound is the most expensive step for the randomized variant of MFP, so our work improves the best runtime for this randomized variant, while maintaining the same $O(m)$ space requirement.

For deterministic MFP, our runtime (and space) improvement is even more dramatic, thanks to our simplified deterministic rounding scheme. The previous deterministic rounding scheme (Pivot Strategy 2) requires identifying all open wedges and storing a map back and forth between open wedges and edges they contain. This means that the space and runtime requirements are both $\Omega(|W| + m)$. In contrast, pivoting based on degrees can be implemented in $O(m)$ time and space. This is done by computing node degrees and then placing nodes in one of $n$ bins based on their degree. We can then iterate through bins, greedily selecting a node with the highest degree and forming a cluster around it. For each node $u$ added to a cluster, we can update the degrees of its neighbors, and update their location in the bins, in $O(d_u)$ time. Overall the runtime is $O(m)$. With this new approach, we now have randomized and deterministic variants of MFP with a runtime of ${O}(m^{1.5})$ and space constraint of $O(m)$.

\paragraph{STC LP solvers.}
Algorithm~\ref{alg:comb-stc-lp} is the pseudocode for the combinatorial procedure for solving the STC LP relaxation given in Section~\ref{sec:stc_lp_combinatorial}.
When finding a minimum $s$-$t$ cut of $G_{st}$, we can double the weight of edges adjacent to $s$ and $t$ and replace infinite weight edges with edges of weight $m+1$, without changing the optimal solution. This means we are solving an $s$-$t$ cut problem in a directed graph with $2m + 2$ nodes and and $2m + 2|\mathcal{W}|$ edges, where all edge weights are integers that are polynomially bounded in the input size. Using the nearly linear time maximum $s$-$t$ flow solver of~\citet{chen2022maximum}, the STC LP relaxation can be solved in $(m + |\mathcal{W}|)^{1 + o(1)}$-time. We can also obtain a runtime of $\tilde{O}(|\mathcal{W}| + m^{1.5})$ using the results of~\citet{van2021minimum}, which is slightly better if $m^{1.5} = O(|\mathcal{W}|)$. We get a deterministic algorithm with a runtime of $\tilde{O}(\min \{ (m + |\mathcal{W}|)^{1.5}, (m + |\mathcal{W}|)\cdot m^{2/3}$) by using the well-known maximum $s$-$t$ flow algorithm of~\citet{goldberg1998beyond}. Note that the relative size of $(m + |\mathcal{W}|)^{1/2}$ and $m^{2/3}$ depends on the graph. For example, it is possible for $|\mathcal{W}|$ to be very small (even zero), in which case the former is better. It is also possible to have $|\mathcal{W}| = \Theta(n^3)$ and $m = \Theta(n^2)$, in which case $m^{2/3}$ is smaller.

There are several recent theoretical algorithms for solving an LP of the form $\min_{\textbf{A}\textbf{x} = \textbf{b}; \textbf{x} \geq 0} \textbf{c}^T \textbf{x}$, that run in current matrix multiplication time~\cite{van2020deterministic,shunhua2021faster,cohen2021solving}. When written in this format, the STC LP has $(m + |\mathcal{W}|)$ constraints, and applying these solvers yields runtimes that are all at least $\Omega((m + |\mathcal{W}|)^2)$.

\section{Additional Implementation and Experimental Details}

\paragraph{Implementation details}
Section~\ref{sec:runtime} and Appendix~\ref{app:runtime} provide details for the best runtimes that can be obtained in theory. As is often the case, there is a gap between the best theoretical algorithms and the most practical approaches. As a key example, the current best algorithms for maximum $s$-$t$ flows do not come with practical implementations. For solving minimum $s$-$t$ cut problems in practice, we instead apply a fast Julia implementation of the push-relabel algorithm for maximum $s$-$t$ flows. When computing a maximal edge-disjoint set of open wedges in practice, we apply the simpler approach that iterates through nodes and then pairs of neighbors of nodes to check for open wedges. In our experimental results, computing this lower bound tends to be very fast, and is often even faster than the pivot step.

The example in the proof of Theorem~\ref{lem:mfp-best} motivates the following postprocessing step.
After a set of clusters is obtained, we check each pair of them.  If all the edges between two clusters are present in the input graph, we merge them into a single cluster.
If the algorithm above has produced $c$ clusters, this step may take $\Omega(c^{2})$ time, which can be prohibitive.
Since this preprocessing can be stopped anytime safely, in practice, we can apply it with a fixed time. 
It can be completely turned off to save time, or carried out exhaustively to achieve the best effect.
Our experiments, shown in Table~\ref{tab:small} and Figure~\ref{fig:mergeratios}, demonstrate that for most graphs, it significantly improves the outcomes.
It is worth noting that this postprocessing is independent on the algorithms used to produce the clusters, and can be used for other algorithms for \textsc{Cluster Deletion}.

All of our algorithms are implemented in Julia. 
Code for our algorithms and experiments can be found at~\url{https://github.com/vibalcam/combinatorial-cluster-deletion}. Given the overlap with the previous work, our new implementations build directly on and improve the open source implementations of~\citet{veldt2022correlation}, available at~\url{https://github.com/nveldt/FastCC-via-STC}, released under an MIT License. 

\paragraph{Datasets}
The graphs in Figure~\ref{fig:mfpratios} all come from the SNAP network repository~\cite{snapnets} and come from several different types of graph classes. This includes social networks, road networks, citation networks, collaboration networks, and web graphs. All graphs have been standardized to remove weights, directions, and self-loops. The number of nodes and edges for each dataset are shown in Table~\ref{tab:full_results}.
The graphs in Figure~\ref{fig:mergeratios} are from the Facebook100 dataset~\cite{traud2012facebook}. We specifically consider 46 smallest graphs in terms of the number of edges. The largest of these is Cal65, which has 11,247 nodes and 351,358 edges. 
Running MFP is very fast on all of these graphs (always less than 0.02 seconds for DegMFP). We only considered the smallest graphs in the collection since (in its current form) our cluster merging heuristic does not scale as easily (taking over 2 hours for the largest graph). 
The small graphs considered in Table~\ref{tab:small} are all available from the Suitesparse matrix collection~\cite{suitesparse2011davis}. We specifically consider graphs from the Arenas collection (email, celegansmetabolic), Newman collection (Netscience, polblogs, polbooks, football, adjnoun, celegansneural), Pajek collection (Erdos991, Roget, SmaGri), and Mathworks collection (Harvard500).

\paragraph{Additional experimental results}
\begin{table}[t!]
	\caption{Improved approximation ratios (\cd{} solution divided by MFP lower bound) for several small graphs when using the merge heuristic. The \% weak column reports the percentage of edges that MFP labels weak before performing the pivot step.}
	\label{tab:small}
	\centering
	\scalebox{0.95}{\begin{tabular}{l r r lll}
			\toprule
			\textbf{Graph} & $|V| $& $|E|$ & \% weak & MFP approx & \emph{+merge} approx\\
   \midrule
Netscience & 379 & 914 & 68.93 & 1.72& 1.58 \\
Harvard500 & 500 & 2043 & 75.97 & 1.95& 1.79 \\
football & 115 & 613 & 83.52 & 1.79& 1.47 \\
Erdos991 & 446 & 1413 & 95.4 & 1.99& 1.67 \\
celegansmetabolic & 453 & 2025 & 95.41 & 1.99& 1.78 \\
polbooks & 105 & 441 & 95.69 & 1.98& 1.68 \\
email & 1133 & 5451 & 95.98 & 1.98& 1.80 \\ 
SmaGri & 1024 & 4916 & 98.05 & 2.00& 1.85 \\
Roget & 994 & 3640 & 98.24 & 2.00& 1.69 \\
celegansneural & 297 & 2148 & 98.88 & 2.00& 1.81 \\
adjnoun & 112 & 425 & 99.29 & 2.00& 1.76 \\
polblogs & 1222 & 16714 & 99.75 & 2.00& 1.92 \\
			\bottomrule
	\end{tabular}}
\end{table} 

 In Table~\ref{tab:small} we display improved approximation ratios that can be achieved using our cluster merging heuristic on several small graphs. Although our current implementation of the algorithm is not optimized for runtime, it leads to noticeably better approximation ratios for all graphs where we ran it. 

In Table~\ref{tab:full_results} we provide a more detailed look at the performance of our algorithms on the large SNAP graphs. This includes runtimes for computing lower bounds, the value of lower bounds, and the a posteriori approximation ratios (\cd{} solution divided by the lower bound computed by each method) achieved using our STC LP rounding method and MFP with three different pivoting strategies. For rounding the STC LP, we use the degree-based pivoting strategy as this is both fast and deterministic.

It is interesting to note that rounding the output to our combinatorial STC LP solver (the Comb-LP column in Table~\ref{tab:full_results}) tends to produce worse approximation ratios than MFP because it produces worse \cd{} solutions. This may be because the STC LP rounding technique is somewhat simplistic in that it treats all edges in $E_h = \{(i,j) \in E \colon x_{ij} = 1/2\}$ and $E_1 = \{(i,j) \in E \colon x_{ij} = 1\}$  in the same way (namely, delete these edges and then perform a pivoting step). One open question is whether it is possible to develop an improved rounding scheme that better leverages the difference between edges in $E_h$ and edges in $E_1$. As a related observation, rounding the STC LP solution returned by Gurobi tends to produce better results, beating MFP in some cases. This is likely due to slight differences in \emph{which} solution each method finds for the STC LP. An interesting direction for future research is to better understand how different solutions to the LP affect downstream \cd solutions, and then design techniques for quickly finding the most favorable LP solution for \cd. 

It is worth noting that despite the lower approximation ratios produced by Comb-LP, the STC LP lower bound for \textsc{Cluster Deletion} is always at least as tight as the lower bound computed by MFP. Furthermore, this lower bound is strictly tighter for all instances observed in practice. Thus, by combining the output of MFP with the tighter STC LP lower bound, we immediately obtain strictly better approximation guarantees than we obtain by running either method by itself.



\begin{longtable}{llrrrrr}
\caption{Detailed results (\(n\) = \(|V|\), and \(m\) = \(|E|\)). Asterisks indicate the method ran out of memory. Dashes indicate one case where Gurobi failed to produce an optimal solution for another reason. method ran out of time (2 days maximum allotted time). The RanMFP column shows results specifically for RanMFP-100.}
\label{tab:full_results}\\
\toprule
\textbf{Graph} & & \textbf{RanMFP}& \textbf{DegMFP}& \textbf{RatMFP}& \textbf{Comb-LP}& \textbf{Gur-LP}\\*
\midrule
\endfirsthead
\toprule
\textbf{Graph} & & \textbf{RanMFP}& \textbf{DegMFP}& \textbf{RatMFP}& \textbf{Comb-LP}& \textbf{Gur-LP}\\*
\midrule
\endhead
\multicolumn{7}{|c|}{\textit{Continued on next page}} \\
\hline
\endfoot
\endlastfoot
\textsc{amazon0302} & LB & 402,933 & 402,933 & 402,933 & 438,400.0 & ***\\*
& UB & 814,852 & 779,652 & 779,523 & 855,620.0 & ***\\*
$n =$ 262,111 & Ratio  & 2.022 & 1.935 & 1.935 & 1.952 & ***\\* [2mm]
$m =$ 899,792 & Run LB  & 0.262 & 0.196 & 0.282 & 488.675 & ***\\*
& Run round  & 2.043 & 0.927 & 23.413 & 0.848 & ***\\*
& Run total  & 2.305 & 1.123 & 23.694 & 489.523 & ***\\*
\midrule
\textsc{amazon0312} & LB & 1,099,863 & 1,099,863 & 1,099,863 & 1.1616155e6 & ***\\*
& UB & 2,216,290 & 2,150,602 & 2,150,294 & 2.292491e6 & ***\\*
$n =$ 400,727 & Ratio  & 2.015 & 1.955 & 1.955 & 1.974 & ***\\* [2mm]
$m =$ 2,349,869 & Run LB  & 0.673 & 0.608 & 0.685 & 5,170.367 & ***\\*
& Run round  & 3.406 & 1.231 & 39.710 & 0.166 & ***\\*
& Run total  & 4.079 & 1.839 & 40.395 & 5,170.533 & ***\\*
\midrule
\textsc{amazon0505} & LB & 1,141,860 & 1,141,860 & 1,141,860 & 1.205665e6 & ***\\*
& UB & 2,300,995 & 2,232,537 & 2,232,231 & 2.378196e6 & ***\\*
$n =$ 410,236 & Ratio  & 2.015 & 1.955 & 1.955 & 1.973 & ***\\* [2mm]
$m =$ 2,439,437 & Run LB  & 0.660 & 0.812 & 0.645 & 5,204.671 & ***\\*
& Run round  & 3.435 & 1.290 & 46.796 & 0.152 & ***\\*
& Run total  & 4.095 & 2.102 & 47.441 & 5,204.823 & ***\\*
\midrule
\textsc{amazon0601} & LB & 1,142,262 & 1,142,262 & 1,142,262 & 1.2069215e6 & ***\\*
& UB & 2,302,232 & 2,232,009 & 2,231,696 & 2.37985e6 & ***\\*
$n =$ 403,394 & Ratio  & 2.016 & 1.954 & 1.954 & 1.972 & ***\\* [2mm]
$m =$ 2,443,408 & Run LB  & 0.669 & 0.618 & 0.672 & 5,102.761 & ***\\*
& Run round  & 3.535 & 1.258 & 41.885 & 0.180 & ***\\*
& Run total  & 4.204 & 1.876 & 42.557 & 5,102.941 & ***\\*
\midrule
\textsc{ca-AstroPh} & LB & 87,632 & 87,632 & 87,632 & 91,260.0 & 91,260.0\\*
& UB & 176,099 & 162,552 & 162,550 & 175,540.0 & 175,078.0\\*
$n =$ 18,772 & Ratio  & 2.01 & 1.855 & 1.855 & 1.924 & 1.918\\* [2mm]
$m =$ 198,050 & Run LB  & 0.073 & 0.074 & 0.075 & 41.889 & 132.727\\*
& Run round  & 0.155 & 0.159 & 0.570 & 0.043 & 0.458\\*
& Run total  & 0.228 & 0.233 & 0.645 & 41.933 & 133.185\\*
\midrule
\textsc{ca-CondMat} & LB & 39,764 & 39,764 & 39,764 & 42,019.0 & 42,019.0\\*
& UB & 80,464 & 75,133 & 75,127 & 81,989.0 & 80,860.0\\*
$n =$ 23,133 & Ratio  & 2.024 & 1.889 & 1.889 & 1.951 & 1.924\\* [2mm]
$m =$ 93,439 & Run LB  & 0.023 & 0.022 & 0.021 & 7.961 & 18.080\\*
& Run round  & 0.158 & 0.132 & 0.532 & 0.067 & 0.161\\*
& Run total  & 0.181 & 0.154 & 0.553 & 8.028 & 18.241\\*
\midrule
\textsc{ca-GrQc} & LB & 4,789 & 4,789 & 4,789 & 5,196.0 & 5,196.0\\*
& UB & 9,336 & 8,424 & 8,424 & 9,046.0 & 8,528.0\\*
$n =$ 5,242 & Ratio  & 1.949 & 1.759 & 1.759 & 1.741 & 1.641\\* [2mm]
$m =$ 14,484 & Run LB  & 0.005 & 0.004 & 0.005 & 0.318 & 1.038\\*
& Run round  & 0.037 & 0.018 & 0.119 & 0.018 & 0.054\\*
& Run total  & 0.042 & 0.022 & 0.123 & 0.336 & 1.092\\*
\midrule
\textsc{ca-HepPh} & LB & 37,602 & 37,602 & 37,602 & 41,147.5 & 41,147.5\\*
& UB & 64,101 & 55,102 & 55,141 & 61,117.0 & 60,556.0\\*
$n =$ 12,008 & Ratio  & 1.705 & 1.465 & 1.466 & 1.485 & 1.472\\* [2mm]
$m =$ 118,489 & Run LB  & 0.307 & 0.322 & 0.314 & 32.220 & 110.161\\*
& Run round  & 0.111 & 0.040 & 2.110 & 0.031 & 0.132\\*
& Run total  & 0.418 & 0.362 & 2.424 & 32.251 & 110.293\\*
\midrule
\textsc{ca-HepTh} & LB & 10,855 & 10,855 & 10,855 & 11,516.5 & 11,516.5\\*
& UB & 21,806 & 21,006 & 21,006 & 22,538.0 & 21,870.0\\*
$n =$ 9,877 & Ratio  & 2.009 & 1.935 & 1.935 & 1.957 & 1.899\\* [2mm]
$m =$ 25,973 & Run LB  & 0.005 & 0.005 & 0.005 & 0.815 & 2.033\\*
& Run round  & 0.112 & 0.056 & 0.199 & 0.026 & 0.069\\*
& Run total  & 0.117 & 0.061 & 0.204 & 0.841 & 2.101\\*
\midrule
\textsc{cit-HepPh} & LB & 208,953 & 208,953 & 208,953 & 210,366.0 & --\\*
& UB & 418,074 & 417,075 & 417,069 & 420,702.0 & --\\*
$n =$ 34,546 & Ratio  & 2.001 & 1.996 & 1.996 & 2.0 & --\\* [2mm]
$m =$ 420,877 & Run LB  & 0.097 & 0.095 & 0.100 & 123.203 & --\\*
& Run round  & 0.247 & 0.049 & 0.908 & 0.008 & --\\*
& Run total  & 0.345 & 0.145 & 1.008 & 123.210 & --\\*
\midrule
\textsc{cit-HepTh} & LB & 174,612 & 174,612 & 174,612 & 175,975.5 & --\\*
& UB & 349,413 & 348,385 & 348,383 & 351,817.0 & --\\*
$n =$ 27,770 & Ratio  & 2.001 & 1.995 & 1.995 & 1.999 & --\\* [2mm]
$m =$ 352,285 & Run LB  & 0.089 & 0.143 & 0.100 & 141.213 & --\\*
& Run round  & 0.187 & 0.033 & 0.946 & 0.013 & --\\*
& Run total  & 0.276 & 0.176 & 1.046 & 141.226 & --\\*
\midrule
\textsc{cit-Patents} & LB & 8,141,691 & 8,141,691 & 8,141,691 & *** & ***\\*
& UB & 16,288,581 & 16,267,462 & 16,267,460 & *** & ***\\*
$n =$ 3,774,768 & Ratio  & 2.001 & 1.998 & 1.998 & *** & ***\\* [2mm]
$m =$ 16,518,947 & Run LB  & 6.331 & 6.125 & 6.969 & *** & ***\\*
& Run round  & 44.362 & 8.704 & 410.382 & *** & ***\\*
& Run total  & 50.693 & 14.830 & 417.351 & *** & ***\\*
\midrule
\textsc{com-Amazon} & LB & 426,524 & 426,524 & 426,524 & 455,045.5 & ***\\*
& UB & 859,596 & 834,788 & 834,693 & 897,651.0 & ***\\*
$n =$ 334,863 & Ratio  & 2.015 & 1.957 & 1.957 & 1.973 & ***\\* [2mm]
$m =$ 925,872 & Run LB  & 0.243 & 0.234 & 0.282 & 472.103 & ***\\*
& Run round  & 2.812 & 0.871 & 26.045 & 0.213 & ***\\*
& Run total  & 3.056 & 1.104 & 26.327 & 472.316 & ***\\*
\midrule
\textsc{com-DBLP} & LB & 404,019 & 404,019 & 404,019 & 449,921.5 & ***\\*
& UB & 825,495 & 737,218 & 737,172 & 853,743.0 & ***\\*
$n =$ 317,080 & Ratio  & 2.043 & 1.825 & 1.825 & 1.898 & ***\\* [2mm]
$m =$ 1,049,866 & Run LB  & 0.344 & 0.374 & 0.396 & 631.427 & ***\\*
& Run round  & 2.811 & 1.937 & 41.965 & 0.985 & ***\\*
& Run total  & 3.155 & 2.311 & 42.362 & 632.412 & ***\\*
\midrule
\textsc{com-LiveJournal} & LB & 16,772,636 & 16,772,636 & 16,772,636 & *** & ***\\*
& UB & 33,608,381 & 33,275,587 & 33,277,914 & *** & ***\\*
$n =$ 3,997,962 & Ratio  & 2.004 & 1.984 & 1.984 & *** & ***\\* [2mm]
$m =$ 34,681,189 & Run LB  & 17.002 & 17.034 & 25.346 & *** & ***\\*
& Run round  & 48.159 & 23.847 & 1,742.355 & *** & ***\\*
& Run total  & 65.161 & 40.880 & 1,767.701 & *** & ***\\*
\midrule
\textsc{com-Orkut} & LB & 58,505,482 & 58,505,482 & 58,505,482 & *** & ***\\*
& UB & 117,026,574 & 116,953,503 & 116,953,361 & *** & ***\\*
$n =$ 3,072,441 & Ratio  & 2.0 & 1.999 & 1.999 & *** & ***\\* [2mm]
$m =$ 117,185,083 & Run LB  & 122.594 & 121.138 & 135.962 & *** & ***\\*
& Run round  & 215.619 & 17.428 & 17,305.204 & *** & ***\\*
& Run total  & 338.213 & 138.566 & 17,441.166 & *** & ***\\*
\midrule
\textsc{com-Youtube} & LB & 1,429,785 & 1,429,785 & 1,429,785 & *** & ***\\*
& UB & 2,860,915 & 2,855,070 & 2,855,066 & *** & ***\\*
$n =$ 1,134,890 & Ratio  & 2.001 & 1.997 & 1.997 & *** & ***\\* [2mm]
$m =$ 2,987,624 & Run LB  & 0.775 & 0.788 & 1.024 & *** & ***\\*
& Run round  & 9.276 & 3.136 & 63.352 & *** & ***\\*
& Run total  & 10.051 & 3.924 & 64.376 & *** & ***\\*
\midrule
\textsc{email-Enron} & LB & 84,385 & 84,385 & 84,385 & 87,861.0 & 87,861.0\\*
& UB & 169,567 & 165,774 & 165,765 & 174,693.0 & 172,762.0\\*
$n =$ 36,692 & Ratio  & 2.009 & 1.964 & 1.964 & 1.988 & 1.966\\* [2mm]
$m =$ 183,831 & Run LB  & 0.041 & 0.042 & 0.042 & 113.530 & 339.700\\*
& Run round  & 0.257 & 0.266 & 1.067 & 0.069 & 0.238\\*
& Run total  & 0.298 & 0.307 & 1.109 & 113.599 & 339.938\\*
\midrule
\textsc{email-EuAll} & LB & 174,651 & 174,651 & 174,651 & *** & ***\\*
& UB & 349,298 & 349,296 & 349,296 & *** & ***\\*
$n =$ 265,214 & Ratio  & 2.0 & 2.0 & 2.0 & *** & ***\\* [2mm]
$m =$ 364,481 & Run LB  & 0.088 & 0.087 & 0.098 & *** & ***\\*
& Run round  & 1.822 & 0.348 & 5.963 & *** & ***\\*
& Run total  & 1.911 & 0.435 & 6.061 & *** & ***\\*
\midrule
\textsc{loc-Brightkite} & LB & 101,924 & 101,924 & 101,924 & 106,429.0 & ***\\*
& UB & 204,011 & 203,016 & 203,016 & 212,545.0 & ***\\*
$n =$ 58,228 & Ratio  & 2.002 & 1.992 & 1.992 & 1.997 & ***\\* [2mm]
$m =$ 214,078 & Run LB  & 0.042 & 0.041 & 0.043 & 76.008 & ***\\*
& Run round  & 0.387 & 0.197 & 1.306 & 0.018 & ***\\*
& Run total  & 0.428 & 0.238 & 1.349 & 76.026 & ***\\*
\midrule
\textsc{loc-Gowalla} & LB & 456,499 & 456,499 & 456,499 & *** & ***\\*
& UB & 914,484 & 907,916 & 907,897 & *** & ***\\*
$n =$ 196,591 & Ratio  & 2.003 & 1.989 & 1.989 & *** & ***\\* [2mm]
$m =$ 950,327 & Run LB  & 0.203 & 0.245 & 0.213 & *** & ***\\*
& Run round  & 1.539 & 0.701 & 7.949 & *** & ***\\*
& Run total  & 1.742 & 0.946 & 8.162 & *** & ***\\*
\midrule
\textsc{roadNet-CA} & LB & 1,275,870 & 1,275,870 & 1,275,870 & 1.379125e6 & ***\\*
& UB & 2,556,485 & 2,536,702 & 2,536,702 & 2.745368e6 & ***\\*
$n =$ 1,971,281 & Ratio  & 2.004 & 1.988 & 1.988 & 1.991 & ***\\* [2mm]
$m =$ 2,766,607 & Run LB  & 0.547 & 0.985 & 0.406 & 6,263.413 & ***\\*
& Run round  & 16.925 & 7.356 & 188.404 & 1.377 & ***\\*
& Run total  & 17.472 & 8.341 & 188.810 & 6,264.791 & ***\\*
\midrule
\textsc{roadNet-PA} & LB & 711,585 & 711,585 & 711,585 & 769,226.0 & ***\\*
& UB & 1,425,699 & 1,414,908 & 1,414,908 & 1.531777e6 & ***\\*
$n =$ 1,090,920 & Ratio  & 2.004 & 1.988 & 1.988 & 1.991 & ***\\* [2mm]
$m =$ 1,541,898 & Run LB  & 0.193 & 0.195 & 0.189 & 1,367.728 & ***\\*
& Run round  & 9.130 & 2.911 & 70.004 & 0.570 & ***\\*
& Run total  & 9.323 & 3.106 & 70.193 & 1,368.297 & ***\\*
\midrule
\textsc{roadNet-TX} & LB & 883,250 & 883,250 & 883,250 & 958,145.5 & ***\\*
& UB & 1,769,957 & 1,755,407 & 1,755,407 & 1.9056e6 & ***\\*
$n =$ 1,393,383 & Ratio  & 2.004 & 1.987 & 1.987 & 1.989 & ***\\* [2mm]
$m =$ 1,921,660 & Run LB  & 0.245 & 0.245 & 0.239 & 3,041.782 & ***\\*
& Run round  & 11.510 & 4.519 & 106.177 & 0.749 & ***\\*
& Run total  & 11.755 & 4.764 & 106.416 & 3,042.530 & ***\\*
\midrule
\textsc{soc-Epinions1} & LB & 197,337 & 197,337 & 197,337 & 202,521.0 & ***\\*
& UB & 394,809 & 394,032 & 394,031 & 404,825.0 & ***\\*
$n =$ 75,888 & Ratio  & 2.001 & 1.997 & 1.997 & 1.999 & ***\\* [2mm]
$m =$ 405,740 & Run LB  & 0.091 & 0.091 & 0.103 & 808.772 & ***\\*
& Run round  & 0.516 & 0.413 & 1.706 & 0.035 & ***\\*
& Run total  & 0.607 & 0.504 & 1.809 & 808.807 & ***\\*
\midrule
\textsc{soc-LiveJournal1} & LB & 20,681,921 & 20,681,921 & 20,681,921 & *** & ***\\*
& UB & 41,452,776 & 40,988,289 & 40,999,181 & *** & ***\\*
$n =$ 4,847,571 & Ratio  & 2.004 & 1.982 & 1.982 & *** & ***\\* [2mm]
$m =$ 42,851,237 & Run LB  & 41.378 & 38.295 & 21.741 & *** & ***\\*
& Run round  & 97.697 & 47.469 & 3,212.986 & *** & ***\\*
& Run total  & 139.075 & 85.764 & 3,234.728 & *** & ***\\*
\midrule
\textsc{soc-Slashdot0811} & LB & 229,500 & 229,500 & 229,500 & 234,429.5 & ***\\*
& UB & 459,048 & 458,653 & 458,653 & 468,752.0 & ***\\*
$n =$ 77,360 & Ratio  & 2.0 & 1.998 & 1.998 & 2.0 & ***\\* [2mm]
$m =$ 469,180 & Run LB  & 0.101 & 0.101 & 0.104 & 488.743 & ***\\*
& Run round  & 0.698 & 0.186 & 1.594 & 0.024 & ***\\*
& Run total  & 0.800 & 0.287 & 1.697 & 488.767 & ***\\*
\midrule
\textsc{soc-Slashdot0902} & LB & 247,059 & 247,059 & 247,059 & 251,944.0 & ***\\*
& UB & 494,192 & 493,777 & 493,778 & 503,815.0 & ***\\*
$n =$ 82,168 & Ratio  & 2.0 & 1.999 & 1.999 & 2.0 & ***\\* [2mm]
$m =$ 504,230 & Run LB  & 0.103 & 0.103 & 0.112 & 604.756 & ***\\*
& Run round  & 0.560 & 0.190 & 1.844 & 0.030 & ***\\*
& Run total  & 0.663 & 0.293 & 1.956 & 604.786 & ***\\*
\midrule
\textsc{web-BerkStan} & LB & 3,101,047 & 3,101,047 & 3,101,047 & *** & ***\\*
& UB & 6,217,043 & 6,148,769 & 6,148,776 & *** & ***\\*
$n =$ 685,230 & Ratio  & 2.005 & 1.983 & 1.983 & *** & ***\\* [2mm]
$m =$ 6,649,470 & Run LB  & 3.144 & 3.179 & 2.982 & *** & ***\\*
& Run round  & 6.065 & 1.077 & 75.884 & *** & ***\\*
& Run total  & 9.209 & 4.256 & 78.866 & *** & ***\\*
\midrule
\textsc{web-Google} & LB & 1,889,936 & 1,889,936 & 1,889,936 & *** & ***\\*
& UB & 3,820,726 & 3,652,072 & 3,652,510 & *** & ***\\*
$n =$ 916,428 & Ratio  & 2.022 & 1.932 & 1.933 & *** & ***\\* [2mm]
$m =$ 4,322,051 & Run LB  & 2.109 & 2.534 & 2.151 & *** & ***\\*
& Run round  & 9.049 & 1.851 & 125.882 & *** & ***\\*
& Run total  & 11.158 & 4.385 & 128.034 & *** & ***\\*
\midrule
\textsc{web-NotreDame} & LB & 414,253 & 414,253 & 414,253 & *** & ***\\*
& UB & 829,768 & 820,767 & 820,771 & *** & ***\\*
$n =$ 325,729 & Ratio  & 2.003 & 1.981 & 1.981 & *** & ***\\* [2mm]
$m =$ 1,090,108 & Run LB  & 0.479 & 0.472 & 0.642 & *** & ***\\*
& Run round  & 2.574 & 0.593 & 21.478 & *** & ***\\*
& Run total  & 3.053 & 1.064 & 22.120 & *** & ***\\*
\midrule
\textsc{web-Stanford} & LB & 948,859 & 948,859 & 948,859 & *** & ***\\*
& UB & 1,903,048 & 1,877,270 & 1,877,228 & *** & ***\\*
$n =$ 281,903 & Ratio  & 2.006 & 1.978 & 1.978 & *** & ***\\* [2mm]
$m =$ 1,992,636 & Run LB  & 0.961 & 1.035 & 0.911 & *** & ***\\*
& Run round  & 2.476 & 0.587 & 13.140 & *** & ***\\*
& Run total  & 3.437 & 1.621 & 14.052 & *** & ***\\*
\midrule
\textsc{wiki-Talk} & LB & 2,313,080 & 2,313,080 & 2,313,080 & *** & ***\\*
& UB & 4,626,181 & 4,626,014 & 4,626,014 & *** & ***\\*
$n =$ 2,394,385 & Ratio  & 2.0 & 2.0 & 2.0 & *** & ***\\* [2mm]
$m =$ 4,659,565 & Run LB  & 1.546 & 1.600 & 1.557 & *** & ***\\*
& Run round  & 22.358 & 1.233 & 103.372 & *** & ***\\*
& Run total  & 23.904 & 2.834 & 104.929 & *** & ***\\*
\midrule
\textsc{wiki-topcats} & LB & 12,689,197 & 12,689,197 & 12,689,197 & *** & ***\\*
& UB & 25,380,869 & 25,369,215 & 25,369,203 & *** & ***\\*
$n =$ 1,791,489 & Ratio  & 2.0 & 1.999 & 1.999 & *** & ***\\* [2mm]
$m =$ 25,444,207 & Run LB  & 8.091 & 8.290 & 8.534 & *** & ***\\*
& Run round  & 19.867 & 1.737 & 104.694 & *** & ***\\*
& Run total  & 27.958 & 10.026 & 113.228 & *** & ***\\*
\bottomrule
\end{longtable}

	
\end{document}